\newif\ifsupp
\definecolor{codegreen}{rgb}{0,0.2,0}
\definecolor{codegray}{rgb}{0.5,0.5,0.5}
\definecolor{codepurple}{rgb}{0.,0.5,0.}
\definecolor{backcolour}{rgb}{0.95,0.95,0.95}
\definecolor{keywordcolor}{rgb}{0.,0,0.6}
\definecolor{numbercolour}{rgb}{0.0,0.0,0.5}
\lstdefinestyle{mystyle}{
literate={0}{{\textcolor{numbercolour}{0}}}{1}%
             {1}{{\textcolor{numbercolour}{1}}}{1}%
             {2}{{\textcolor{numbercolour}{2}}}{1}%
             {3}{{\textcolor{numbercolour}{3}}}{1}%
             {4}{{\textcolor{numbercolour}{4}}}{1}%
             {5}{{\textcolor{numbercolour}{5}}}{1}%
             {6}{{\textcolor{numbercolour}{6}}}{1}%
             {7}{{\textcolor{numbercolour}{7}}}{1}%
             {8}{{\textcolor{numbercolour}{8}}}{1}%
             {9}{{\textcolor{numbercolour}{9}}}{1}%
             {.0}{{\textcolor{numbercolour}{.0}}}{2}
             {.1}{{\textcolor{numbercolour}{.1}}}{2}
             {.2}{{\textcolor{numbercolour}{.2}}}{2}%
             {.3}{{\textcolor{numbercolour}{.3}}}{2}%
             {.4}{{\textcolor{numbercolour}{.4}}}{2}%
             {.5}{{\textcolor{numbercolour}{.5}}}{2}%
             {.6}{{\textcolor{numbercolour}{.6}}}{2}%
             {.7}{{\textcolor{numbercolour}{.7}}}{2}%
             {.8}{{\textcolor{numbercolour}{.8}}}{2}%
             {.9}{{\textcolor{numbercolour}{.9}}}{2}%
             ,
    backgroundcolor=\color{backcolour},   
    commentstyle=\color{codegreen},
    keywordstyle=\color{keywordcolor},
    numberstyle=\tiny\color{codegray},
    stringstyle=\color{codepurple},
    basicstyle=\ttfamily\footnotesize,
    breakatwhitespace=true,      
    breaklines=true,                 
    captionpos=b,                    
    keepspaces=true,                 
    numbers=left,                    
    numbersep=1pt,                  
    showspaces=false,                
    showstringspaces=true,
    showtabs=false,                  
    tabsize=2
}
\DeclareMathOperator*{\argmin}{arg\,min}
\newcommand{\cdd}{\mathbf{c}_{dd}}
\newcommand{\cgd}{\mathbf{c}_{gd}}
\DeclareMathOperator*{\softargmin}{soft\,arg\,min}
\newtheorem{theorem}{Theorem}[section]
\newtheorem{proof}[theorem]{Proof}
\begin{document}

\title{QArray: a GPU-accelerated constant capacitance model simulator for large quantum dot arrays}

\author{Barnaby van Straaten}
\email{barnaby.vanstraaten@kellogg.ox.ac.uk}
\affiliation{Department of Materials, University of Oxford, Oxford OX1 3PH, United Kingdom}

\author{Joseph Hickie}
\affiliation{Department of Materials, University of Oxford, Oxford OX1 3PH, United Kingdom}

\author{Lucas Schorling}
 \affiliation{Department of Engineering, University of Oxford, Oxford OX1 3PH, United Kingdom}

\author{Jonas Schuff}
\affiliation{Department of Materials, University of Oxford, Oxford OX1 3PH, United Kingdom}

\author{Federico Fedele}
\affiliation{Department of Engineering, University of Oxford, Oxford OX1 3PH, United Kingdom}

\author{Natalia Ares}
\email{natalia.ares@eng.ox.ac.uk}
\affiliation{Department of Engineering, University of Oxford, Oxford OX1 3PH, United Kingdom}

\date{\today}

\begin{abstract}
Semiconductor quantum dot arrays are a leading architecture for the development of quantum technologies. Over the years, the constant capacitance model has served as a fundamental framework for simulating, understanding, and navigating the charge stability diagrams of small quantum dot arrays. However, while the size of the arrays keeps growing, solving the constant capacitance model becomes computationally prohibitive. This paper presents an open-source software package able to compute a $100 \times 100$ pixels charge stability diagram of a 16-dot array in less than a second. Smaller arrays can be simulated in milliseconds - faster than they could be measured experimentally, enabling the creation of diverse datasets for training machine learning models and the creation of digital twins that can interface with quantum dot devices in real-time. Our software package implements its core functionalities in the systems programming language Rust and the high-performance numerical computing library JAX. The Rust implementation benefits from advanced optimisations and parallelisation, enabling the users to take full advantage of multi-core processors. The JAX implementation allows for GPU acceleration. 
\end{abstract}

\maketitle

\section{Introduction}

Semiconductor quantum dot arrays are one of the leading platforms for the realisation of large-scale quantum circuits. As they grow in size and complexity, these arrays present exciting opportunities and significant challenges. In particular, efficient simulation tools are becoming increasingly crucial for tuning and practical experimental design.

The constant capacitance model is a widely-used equivalent circuit framework that models the electrostatic characteristics of quantum dot arrays \cite{weil2002, zwolak2018, schroer2007, van2005coulomb, oakes2021automatic, Inh2009, yang2011}. This model has proven to be a valuable resource for gaining insights into charge stability diagrams and for training neural networks for automated tuning strategies \cite{oakes2021automatic, qtt, zwolak_autotuning_2020, Liu_2022, ziegle2022, zwolak2018}. However, it encounters scaling limitations, making the simulations of arrays larger than four dots slow and impractical. Specifically, the problem of computing the lowest energy charge configuration in a brute force manner, as implemented in Refs.~\cite{qtt, zwolak2018}, scales with $(n_\mathrm{max} + 1) ^{n_\mathrm{dot}}$, where $n_\mathrm{max} \geq 1$ is the maximum number of charges considered at any of the quantum dots, and ${n_\mathrm{dot}}$ is the number of quantum dots being simulated.

We introduce \verb|QArray|, an open-source software package for ultra-fast constant capacitance modelling, opening the path to the creation of ever larger datasets for neural network training, as well as the realisation of digitial twins, the use of machine-learning-informed tuning strategies, and the use of these simulators for tuning and characterisation in real-time. Our package contains two new algorithms to compute the ground state charge configuration, which we call \verb|QArray| default and thresholded. These algorithms greatly reduce the number of calculations needed to find the ground state charge configuration by computing the energies of only the most likely, rather than the full, set of possible configurations. Our default algorithm's complexity scales with  $2^{n_{\text{dot}}}$. The thresholded algorithm sacrifices some accuracy to further reduce the number of charge states considered to $(1 + t)^{n_{\text{dot}}}$, where $t \in [0, 1]$ is a threshold defining which charge states must be considered and which can be neglected. Both our algorithms considerably reduce computational time, especially for larger arrays containing more charges. For example, our default algorithm, in the case of a five-dot array containing five charges, reduces the number of change states considered by at least $243$ times compared to a brute-force approach. 

Within \verb|QArray|, the brute-force approach and our default algorithm are implemented in Python, Rust and JAX. The threshold algorithm is implemented in Python and Rust. The Rust implementations are optimised for parallel computation on a CPU, using \verb|rayon| \cite{matsakis2014rust, costanzo2021, Rayon-Rs}. At the same time, a GPU can accelerate the JAX implementations through just-in-time (jit) XLA compilation \cite{sabne2020xla}. The Python implementations offer no practical advantages over the ones in Rust or JAX but are maintained for benchmarking and comparison testing. The Rust and JAX implementations of our algorithms can model arrays with 16 or more dots in seconds and simulate the charge stability diagrams of smaller arrays in milliseconds.

The following section details the matrix formulation of the constant capacitance model. In Section~\ref{sec:algorithm}, we discuss our algorithms and how they compute the charge ground state in both open and closed quantum dot arrays. Then, Section \ref{sec: additional} covers how realistic features can be incorporated into the simulation and some useful analytical results. Section~\ref{sec: examples} provides a step-by-step example of using our software package to simulate a double quantum dot and demonstrates that \verb|Qarray| can be used to recreate experimentally measured charge stability diagrams of open quadruple dot and closed five dot arrays. In Section~\ref{sec:implementation}, we discuss the details of our software implementations and how we optimised for speed. Finally, in Section \ref{sec: benchmarks}, we benchmark our implementations of each algorithm.

\section{The constant capacitance model}
\label{sec:cc_section}
\begin{figure*}
        \centering
        \includegraphics[width=\textwidth]{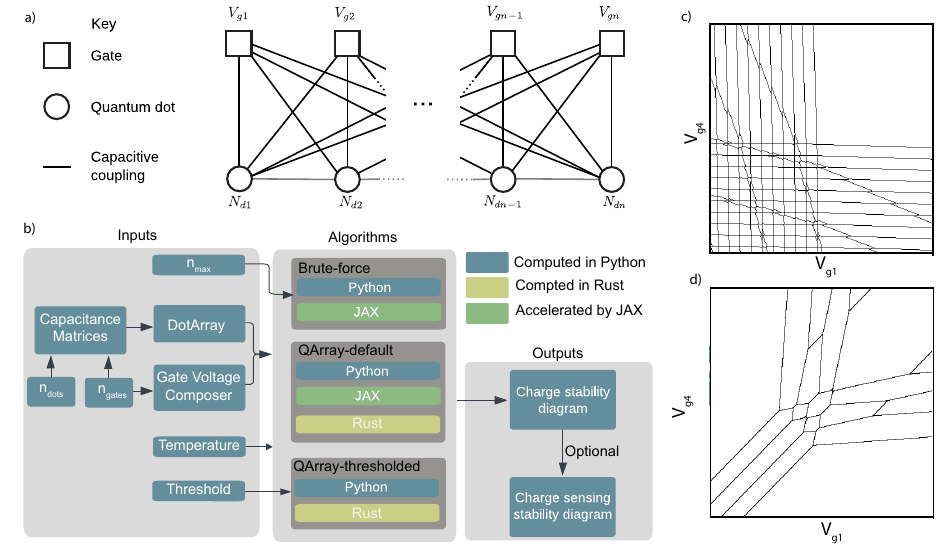}
	\caption{
	(a) Schematic of a generic array of quantum dots: The diagram illustrates the representation of a quantum dot device as a system of capacitively coupled electrostatic gates and quantum dots. Solid black lines represent the direct dot-to-gates couplings, while capacitors representing the interdot capacitance are modelled with grey lines respectively. From the choice of the capacitance matrix (stored in the DotArray class), the user can define an array with an arbitrary number of dots and geometry. For given gate voltages, a discrete integer number of charges minimises the system's free energy. 
	(b) The structure of the Qarray package featuring the multiple software implementations, represented as colour-coded boxes denoting Python (blue), Rust (yellow), and JAX (green), of the brute-force, default and thresholded algorithms. The gate voltage composer class generates voltage vectors for native sweeps (e.g., $V_{g1}$ against $V_{g2}$), or it can be used to define arbitrary virtual gate sweeps. DotArray offers methods for computing charge stability diagrams for open or closed regimes with any core choice. Optionally, the charge stability diagram can be coupled with simulated charge sensors to model their response. 
	(c) Charge stability diagrams (computed with our default algorithm implemented in Rust) for a hole-based linear array of four quantum dots in the open regime. The black lines denote transitions separating regions in the charge stability diagram with a fixed dot occupation.
    (d) Charge stability diagrams (computed with our default algorithm implemented in Rust) for the same linear four-dot array of quantum dots in the closed regime where the total number of charges in the array is fixed to four. }
    \label{fig1}
\end{figure*}

The constant capacitance model describes an array of quantum dots and their associated electrostatic gates as nodes in a network of fixed capacitors \cite{weil2002, zwolak2018, schroer2007, van2005coulomb, oakes2021automatic, Inh2009, yang2011}.  Each node $i$ has an associated charge $Q_i$, an electrostatic potential $V_j$, and its capacitive coupling to every other node $k$ given by $C_{jk}$. The capacitor $C_{ij}$ stores a charge $q_{ij} := C_{ij}(V_i - V_j)$. The total charge on node $i$ is therefore related to its and other nodes' potentials according to
\begin{equation}
	Q_i = \sum_{j}q_{ij} = \sum_{j} C_{ij} (V_i - V_j).     
\end{equation}
Using the Maxwell matrix formulation, we express this as $\vec{Q} = \mathbf{c} \vec{V}$, where $\mathbf{c}$ is the Maxwell capacitance matrix. In this framework, the $\mathbf{c}$ matrix's diagonal elements give each node's total capacitance, and the off-diagonal elements are the negative of the capacitive coupling between the respective nodes. We now distinguish between quantum dots and electrostatic gate nodes by separating the matrix equation into 
\begin{equation}
	\begin{bmatrix}
		\vec{Q}_d \\ 
		\vec{Q}_g
	\end{bmatrix}
=	\begin{bmatrix}
	\cdd & \cgd \\
	\mathbf{c}_{dg} & \mathbf{c}_{gg}
\end{bmatrix}
\begin{bmatrix}
		\vec{V}_d \\ 
		\vec{V}_g
\end{bmatrix}, 
\end{equation} 
where $\vec{Q}_{d (g)}$ represents the charge on the quantum dots (gates) and $\vec{V}_{d (g)}$ denotes the potential on the quantum dots (gates). And the matrices $c_{dd}$, $c_{gd}$, $c_{dg}$, $c_{gg}$ encode dot-dot, gate-dot, dot-gate and gate-gate capacitive couplings, respectively, in the Maxwell format. The potential on the quantum dots can then be computed by
\begin{align}
	&\vec{V}_d = \cdd^{-1}(\vec{Q}_d - \cgd \vec{V}_{g}) \label{eq: potential}.
\end{align}
The free energy of the system can then be written as 
\begin{align}
	&F(\vec{Q}_d; \vec{V}_g) = \frac{1}{2}\vec{Q}_d^{T} \cdd^{-1} Q_d -  (\cdd^{-1} \cgd \vec{V}_g)^{T} \vec{Q}_d
	 \label{eq: free_energy},
\end{align}
where we have dropped the terms without dependence on $\vec{Q}_d$. This free energy function is convex since $\cdd$ is positive definite (for proof, see Appendix~\ref{appendix: positive_definite_proof}). Since the charge on the gates is not of interest, in the remainder of this paper, we drop the subscript and refer to the charge on the quantum dots as $\vec{Q}$. 

The allowed charge states $\vec{Q}=\pm e\vec{N}$ (for holes and electrons respectively) are heavily constrained in a quantum dot array. In particular, for an open quantum dot array, i.e. when the array is coupled to fermionic reservoirs from which it can draw an arbitrary number of charges (see Fig.~\ref{fig1} (c)), the constraints are:
\begin{enumerate}[label=(\roman*)]
	\item The number of charges must be integers, represented as $\vec{N} \in \mathbb{Z}^{n_{\text{dot}}}$. \label{constraint: one}
	\item The number of charges on any given quantum dot must be non-negative, meaning $N_i \geq 0$ for all $i$. \label{constraint: two}
\end{enumerate}
 For a closed quantum dot array, i.e. when the leads are decoupled from the array, and the number of charges is fixed to a finite value $\hat{N}$ \cite{flentje2017, bertrand2015} (see Fig.~\ref{fig1} (d)), we impose the additional constraint:
\begin{enumerate}[label=(\roman*)]
        \setcounter{enumi}{2}
	\item The sum of all charges in the array must equal $\hat{N}$. \label{constraint: three}
\end{enumerate}
At $T=0$ K, the system will adopt the charge state, $\vec{N}^*$, which minimises the free energy whilst obeying the appropriate constraints, such that
\begin{equation} \label{eq: argmin}
	\vec{N}^* = \argmin_{\vec{N} \in \mathcal{N}} F( \vec{N};\vec{V}_g ),
\end{equation}
where $\mathcal{N}$ is the set of all admissible charge configurations for the open or closed regimes of the quantum dot arrays. Due to the constraints, finding the lowest energy charge configuration is an example of a class of optimisation problems called constrained convex quadratic integer problems, also called Mixed-Integer Quadratic Programming (MIQP) problems, which are NP-hard \cite{int_optim1, int_optim2, Bliek2014SolvingMQ}. 

By computing the ground state charge configuration as a function of the gate voltages, the constant capacitance model captures the charge stability diagrams of such quantum dot arrays when the tunnel couplings between quantum dots are weak. In the following section, we discuss our algorithms for computing the charge configuration ground state.

\section{The QArray algorithms}
\label{sec:algorithm}

\begin{figure}
    \centering
    \includegraphics{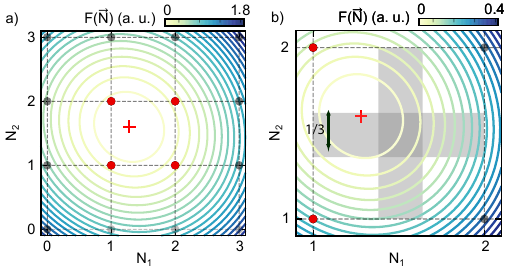}
    \caption{ 
    (a) Diagram depicting the default algorithm applied to an open double dot. The algorithm first finds the continuous minimum (marked with a red cross), which minimises the free energy. The free energy landscape, as a function of the charge on each dot, is shown with contour lines. After computing the continuous minimum, the algorithm evaluates the nearest neighbour discrete charge states (marked as red dots) to determine which is lowest in energy. By comparison, the brute force implementation considers all charge states, leading to it also having to consider the charge states marked as black dots. In this case, (1, 2) is the discrete charge state lowest in energy. 
	(b) Diagram depicting the thresholded algorithm applied to an open double dot. The algorithm computes the continuous minimum (red cross) as in the default algorithm. However, rather than just considering all nearest neighbour discrete charge states, thresholding is used to reduce further the number of charge states needing to be evaluated compared to the default version of the algorithm. As the continuous minimum does not lie within the vertical grey rectangle, for which the width equals the threshold $t = 1/3$, it is unnecessary to consider both the floor and ceiling values for  $N_1$. Instead, it can be rounded down. By contrast, the continuous minimum does lie in the horizontal rectangle, so both floor and ceiling values for $N_2$ must be considered. The discrete charge states, considered by the thresholded algorithm, are shown as red dots.}
	\label{fig:algorithm}
\end{figure}

The \verb|QArray|-default and thresholded algorithms can be broken down into two stages. First, we neglect the integer charge constraint, \ref{constraint: one}, and compute the continuously-valued charge state that minimises the free energy. We will refer to this state as the continuous minimum. In the second stage, we reintroduce the integer charge constraint by evaluating the energy of each discrete charge state that neighbours the continuous minimum and selecting the state with the lowest energy (Fig.~\ref{fig:algorithm} (a)). 

The \verb|QArray|-default and thresholded algorithms differ in which charge states they define as neighbouring the continuous minimum. The thresholded algorithm is much more selective (Sec. \ref{sec: discrete_states}, Fig.~\ref{fig:algorithm} (b)). We formally justify considering only the nearest neighbouring discrete charge states in Appendix~\ref{app:opt_crit}. The fundamental idea is that the free energy is a convex function minimised at the continuous minimum. Therefore, the lowest-energy discrete charge states must lie close to the continuous minimum. 

In the next section (Sec.~\ref{section: continous_minimum}), we will discuss how the continuous minimum is computed for both open and closed arrays, then how the default and thresholded algorithms find the discrete charge state lowest in energy from the nearest neighbours to the continuous minimum (Sec.~\ref{sec: discrete_states}). 

\subsection{Computing the continuous minimum}
\label{section: continous_minimum}

To compute the continuous minimum charge configuration for both open and closed quantum dot arrays, we initially try the analytical solution for the charge configuration derived by neglecting constraints that the charge state must take integer, non-negative values, i.e. constraints~\ref{constraint: one} and~\ref{constraint: two} respectively. If this solution satisfies constraint~\ref{constraint: two}, such that no dot contains a negative number of charges, we accept it and use it. Otherwise, we compute a solution using a numerical solver, which explicitly applies constraint \ref{constraint: two}. We try the analytical solution before resorting to the numerical solver because it is inexpensive to compute compared to the solver, and trying it first offers an overall time advantage. 

\subsubsection{The analytical solutions}

The continuous minimum for an open quantum dot array is 
\begin{equation}
	\vec{Q}^{* \text{cont open}} = \cgd \vec{V}_g,
\end{equation}
see Appendix~\ref{app: continous_minimum_open} for a formal derivation.

For closed quantum dot arrays, we use Lagrangian multipliers to account for the fixed number of charges in the array (imposed by constraint \ref{constraint: three}). As derived in Appendix~\ref{app: lagrange}, the continuous solution can therefore be written as
\begin{equation}
	\vec{Q}^{ * \text{cont closed}} = \cgd \vec{V}_g + \left(\hat{Q}-  \vec{1}^T \cgd \vec{V}_g\right)\frac{\cdd \vec{1}}{\vec{1}^T \cdd\vec{1}}, 
\end{equation}
where $\vec{1} = (1, 1, \ldots, 1)^T \in \mathbb{R}^{n_\mathrm{dot}}$ is the one vector and $\hat{Q}$ is the confined charge. 

In both the open and closed cases, one or more of the elements of $\vec{Q}^{* \text{cont open/closed}}$ may correspond to a negative number of electrons/holes. As mentioned before, if this occurs, we must fall back on the numerical solver to implicitly apply the constraint. The numerical solver is discussed next. 

\subsubsection{The numerical solver}

The numerical solver computes the continuous minimum for open and closed dot arrays as a constrained convex quadratic optimisation problem. The general form of these problems is 
\begin{align}
	\text{minimise } \vec{x}^T \mathbf{M} \vec{x} + \vec{v}^T \vec{x} \nonumber\\ 
	\text{subject to } \vec{l} \leq \mathbf{A} \vec{x} \leq \vec{u}.
\end{align}
In our case $\mathbf{M} = \mathbf{c}_{dd}^{-1}$ and $\vec{v} = - \mathbf{c}_{dd}^{-1} \mathbf{c}_{gd} \vec{V}_g$. Constraint \ref{constraint: two} can be encoded in this form by setting the constraint matrix $\mathbf{A}$ to the identity, and the lower and upper bound vectors ($\vec{l}$ and $\vec{u}$) to vectors of all zeros and infinities, respectively. For a closed array, constraint \ref{constraint: three} can be encoded by setting the constraint matrix as a row matrix with all terms equal to one and upper and lower bound vectors as single value elements equal to $\hat{N}$. Combining constraints for open and closed quantum dot arrays involves concatenating the associated constraint matrix and bound vectors. 

\subsection{Evaluation of the nearest neighbour discrete charge states}\label{sec: discrete_states}

With the continuous minimum in hand, the \verb|QArray| algorithms iterate over the nearest neighbouring discrete charge states to find which one is the lowest in energy. The algorithms differ in how they define the nearest neighbouring discrete charge states. In the following subsection, we discuss precisely which charge states default and thresholded algorithms iterate over. 

\subsubsection{The default algorithm}

The default algorithm will consider all the different charge configurations obtained by the element-wise flooring and ceiling of the values in the continuous minimum. Therefore, the default algorithm iterates over all $2^{n_{\text{dot}}}$ possible ways of rounding the elements of the continuous minimum up or down to the nearest integer. The charge states considered by the default algorithm when computing the lowest energy charge state of a double dot are highlighted in red in Fig.~~\ref{fig:algorithm} (a). This considerably reduces the number of charge states that need to be considered compared to the brute force approach.  As mentioned previously, we formally justify considering only the neighbouring discrete charge states in Appendix~\ref{app:opt_crit}. If the quantum dot array is closed, we eliminate the configurations that lead to the wrong number of charges in the array.

\subsubsection{The thresholded algorithm}

The motivation behind the thresholded algorithm was the observation that it was sometimes unnecessary to consider every one of the $2^{n_{\text{dot}}}$ possible charge combinations. Only if the decimal part of each element of the continuous minimum charge state, $\vec{N}^{*\text{cont}}$, is close to $1/2$ should we consider both charge configurations obtained by rounding up and down to the nearest integer. Otherwise, we can round to the closest charge configuration. As an example, the charge states considered by the thresholding algorithm when computing the lowest energy charge state of a double dot, with a threshold of $1/3$, are highlighted in red in Fig.~\ref{fig:algorithm} (b)).

If we define the threshold as $t \in [0, 1]$, then the condition for having to consider both the floor and ceiling charge configurations on the $i$th quantum dot is 
\begin{equation}
	\left|\text{frac}[N_{i}^{*\text{cont}}] - 1/2 \right| \leq t/2.
\end{equation}
The total number of charge states needed to be evaluated after the thresholding scales with $(t + 1)^{n_{\text{dot}}}$ (see Appendix \ref{sec: number_of_charge_states} for deviation). 

When considering closed arrays, we found that applying the threshold may produce no charge states with the correct number of charges. In this case, we double the threshold and recompute. This threshold is left to the user's discretion; however, in Appendix \ref{appen: threshold}, we try to motivate a good choice of threshold.

\section{Additional functionality}
\label{sec: additional}

This section discusses features of \verb|QArray| that are adjoint to the main algorithms. In particular, how we simulated charge sensing measurements (Sec.~\ref{sec: charge_sensors}) and thermal broadening (Sec.~\ref{sec: thermal_broadening}), and present analytical results regarding optimal choice of gate voltages and virtual gates (Sec.~\ref{sec: analytical_results}). 

\subsection{Charge sensors}\label{sec: charge_sensors}

To simulate charge sensing measurements, we compute the charge state, $\vec{N}^{*}$, of non-charge sensing dots, neglecting the existence of the charge sensing dots but accounting for cross-talk due to the charge sensor gates. Fixing the non-charge sensing dot charges, we compute the number of carriers required to minimise each charge sensor's electrostatic energy. From here, we calculate the charge sensors' potential using this charge state, according to Eq. \eqref{eq: potential}. Finally, we compute the charge sensor's response based on Lorentzian or an approximately Gaussian profile in the weak and strong coupling regimes, respectively~\cite{fuhrer2003}. This method recreates a Coulomb peak sensitivity profile, whilst white, $1/f$, and sensor jump noise can be added on top~\cite{ziegle2022}. 

\subsection{Thermal Broadening}\label{sec: thermal_broadening}

For a finite temperature, the quantum dot system will not adopt the lowest energy charge state. This can be accounted for by replacing the $\argmin$ function in Eq. \eqref{eq: argmin} with a $\softargmin$ function, which computes a Boltzmann weighted sum of the applicable charge states, such that 
\begin{align} \label{eq: softargmin}
    &\vec{N}^* = \softargmin_{\vec{N} \in \mathcal{N}} F( \vec{N};\vec{V}_g ) \;\text{, where}\nonumber \\
    &\softargmin_{\vec{N}} F(\vec{N})= \frac{\sum_{\vec{N}} \vec{N} \exp[-F(\vec{N}) / k_B T]}{\sum_{\vec{N}} \exp[-F(\vec{N}) / k_B T]}.
\end{align}
The summation runs over all the allowed charge configurations. For a closed array, this means eliminating the charge states with the total number of charges different than $\hat{N}$.

\subsection{Optimal gate voltages and virtual gates}\label{sec: analytical_results}

In developing \verb|QArray|, we derived and implemented analytical results that help navigate charge stability diagrams. This section lists them. \\

\textit{Optimal gate voltages}:  the gate voltages that minimise the charge state $\vec{Q}_d$'s free energy, $F(\vec{V}_g, \vec{Q}_d)$ is 
\begin{equation}\label{eq: optimal_gate_voltages}
	\vec{V}^* = 
 \left(\mathbf{R} \cgd\right)^{+} \mathbf{R} \vec{Q}_d, 
\end{equation}
where $\mathbf{R}^T\mathbf{R}= \cdd^{-1}$ is the Cholesky factorisation and $+$ denotes the Moore-Penrose pseudo inverse. The proof of this result is presented in Appendix \ref{app: optimal_gate_voltages}. \\

\textit{Optimal virtual gate matrix}: the optimal virtual gate matrix, $\mathbf{\alpha}$, used to construct virtual gates is 
	\begin{equation}
		\mathbf{\alpha} = (\cgd \cdd^{-1})^+,
		\end{equation}
 where $i$th row of this matrix encodes the electrostatic gate voltages required to change the $i$th dot's potential. The proof of this result is presented in Appendix \ref{app: virtual_gates}

\section{Examples}
\label{sec: examples}

In this section, we explain the usage of \verb|QArray| to produce the stability diagram of a double quantum dot. Firstly, we import the \verb|DotArray| and \verb|GateComposer| classes as follows, \begin{lstlisting}[language=Python]
from qarray import (DotArray, GateVoltageComposer)
import numpy as np
\end{lstlisting}
Upon initialising the \verb|DotArray| class, we specify the system's capacitance matrices, charge carriers, and the computational core, as shown below. The temperature parameter ($T$) can be used to incorporate the effect of thermal broadening. In the example below we set it to zero, ensuring that the solver determines the charge state that minimises the free energy as specified in Eq.~\eqref{eq: argmin}.
\begin{lstlisting}[language=Python]
# Initalising the DotArray class
model = DotArray(
	cdd=[
		[1.3, -0.1],
		[-0.1, 1.3]
	],
	cgd=[
		[1., 0.2],
		[0.2, 1]
	],
	algorithm = "default"
	implementation = "rust"
	charge_carrier = "holes" 
	T = 0.0
)
\end{lstlisting}
In this example, we input the capacitance matrices in the Maxwell format. The software also accepts inputs in the form of dot-dot and gate-dot capacitive couplings using the keyword arguments \verb|Cdd| and \verb|Cgd| (with an upper case "C"). During initialisation, the input parameters are checked by the \verb|pydantic| library \cite{Pydantic}. This validation process checks 
the correctness of the capacitance array introduced by the user. This includes verifying that the values passed for $\cdd$ are of the correct type and that the matrix possesses the correct shape, is positive definite, and is symmetric.

The \verb|GateVoltageComposer| class has several methods for creating gate voltage arrays. The most general is \verb|meshgrid|, as shown below, which mirrors \verb|numpy| in constructing a dense array of gate voltages to sweep over. In this way, it is possible to simulate $N$-dimensional charge stability diagrams.
\begin{lstlisting}[language=Python]
voltage_composer = GateVoltageComposer(
	n_gate=model.n_gate
)
Vx = np.linspace(-3, 3, 100)
Vy = np.linspace(-3, 3, 100)
# Constructing the gate voltage array for 
# the 2D raster sweep
vg = voltage_composer.meshgrid(
	gates=(0, 1), arrays=[Vx, Vy]
)
\end{lstlisting}
We also implement \verb|do1d| and \verb|do2d| methods for generating one-dimensional and two-dimensional gate voltage arrays, which wrap \verb|meshgrid|. These methods are based on the QCoDeS routines of the same names and take the same arguments.

With these classes initialised, we can generate a stability diagram of the system in an open configuration using the \verb|DotArray| model's \verb|ground_state_open| method. The \verb|ground_state_closed| method takes a second argument, \verb|n_charges|, which sets the number of charges in the system. In the following, we set this value to two, \verb|n_open| and \verb|n_closed| are thus \verb|(100, 100, 2) numpy| arrays, such the ground state charge configuration at the corresponding gate voltages is contained in the last dimension. 
\begin{lstlisting}[language=Python]
# calculate the occupation of each dot 
n_open = model.ground_state_open(vg)
n_closed = model.ground_state_closed(vg, 2)
\end{lstlisting}

\subsection{Realistic simulations}

\begin{figure}
    \centering
    \includegraphics{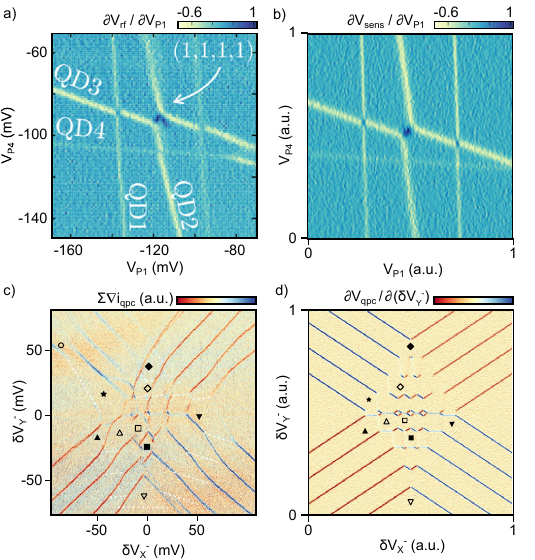}
    \caption{ 
     (a) Adaptation of Fig. 2b from Ref.~\cite{delbecq2014} showing a stability diagram of a linear quadruple quantum dot array in the open regime. This is measured as a function of the two outermost plunger gates voltages, $V_{\mathrm{P1}}$ and $V_{\mathrm{P4}}$. The colour scale represents the derivative of the sensor signal $V_{\mathrm{rf}}$ with respect to $V_{\mathrm{P1}}$.
    (b) A recreation of Fig. 2b from Ref.~\cite{delbecq2014}, simulated using QArray. The colour scale represents the derivative of the simulated sensor signal, $V_{\mathrm{sens}}$, with respect to $V_{\mathrm{P1}}$. 
    (c) Adaptation of Fig. 3a from Ref.~\cite{mortemousque2021} showing the stability diagram of a five-electron configuration corresponding to a five-dot cross-geometry array. The colour scale represents the sum of current changes in all charge sensors ($\sum\nabla i_{\mathrm{qpc}}$) as a function of virtual gates $\delta V_{X}^{-}$ (controlling the left-right detunings) and $\delta V_{Y}^{-}$ (controlling the top-bottom detunings). 
    (d) A recreation of Fig. 3a from \cite{mortemousque2021} using QArray. The colour scale represents the derivative of the simulated sensor signal, in this case $V_{\mathrm{qpc}}$, with respect to $\delta V_{Y}^{-}$. To simulate the sensing of a qpc instead of a sensor dot, we considered only a single gate for the sensor dot architecture and replaced the Lorentzian shape response with a linear function.}
    \label{fig:recreations}
\end{figure}

Beyond simply computing idealised charge stability diagrams \verb|QArray| can simulate the response from a charge sensor and account for the thermal broadening of charge transitions, as explained in Section \ref{sec: additional}. Combined, these two capabilities make it possible to generate simulations of stability diagrams that look considerably closer to experimental data. Figure~\ref{fig:recreations} (a) shows a charge sensor measurement of a charge stability diagram of an open quadruple dot presented in Ref.~\cite{delbecq2014}. In Fig.~\ref{fig:recreations} (b), we simulate this measurement using \verb|QArray| with a $200 \times 200$ pixels resolution. Likewise, Fig.~\ref{fig:recreations} (c) shows a measurement from Ref.~\cite{mortemousque2021} of a stability diagram corresponding to a five-electron closed configuration within a five-dot cross-geometry array. In Fig.~\ref{fig:recreations} (d), we simulate this measurement with a resolution of $400 \times 400$ pixels. We achieved a very good qualitative agreement between both measurements and our simulations. Discrepancies are mainly due to the fact that the constant capacitance model does not capture the curvature of charge transitions; in this model, couplings between dots are kept constant for all gate voltage ranges. 

The code to recreate the simulations displayed in Fig.~\ref{fig:recreations} (b) and (d) is provided as examples within the \verb|QArray| package. We used the Rust implementation of the default algorithm for both reconstructions, but we could have used any algorithm implementation. The compute times are listed in Appendix \ref{app: simulation_times}. As a figure of merit, the Rust implementation of the default algorithm took 0.1 and 0.7\SI{}{\second} for the open and closed arrays, respectively. All implementations of all algorithms in the package produce identical plots.

\section{Implementations}
\label{sec:implementation}

Within \verb|QArray|, the brute-force approach and our default algorithm are implemented in Python, Rust and JAX. The threshold algorithm is implemented in Python and Rust. The Rust and JAX implementations fulfil different use cases. The Rust implementations are optimised for computation on a CPU, while a GPU can accelerate the JAX based implementations. The Python implementation does not provide a practical advantage over the Rust and JAX cores, but it is retained for benchmarking purposes (see Fig.~\ref{fig: benchmarks}(a-b)). In this section, we discuss these different implementations in detail. 

\textit{The Rust implementations}: Rust is a systems programming language on par with \verb|c| in speed \cite{costanzo2021, vincent2023, ivanov2022rust}. Our Rust implementations take advantage of parallelism when sensible, based on workload at runtime, thanks to the functionality provided by \verb|rayon| \cite{costanzo2021, vincent2023, ivanov2022rust, Rayon-Rs}. In addition, we used caching and function memorisation to avoid reevaluating the discrete charge state configurations for which the free energy is evaluated. However, appreciating that Rust is a relatively niche language compared to the ubiquitousness of Python, we interloped the Rust core with Python. This allows the user to gain all the benefits of Rust in their familiar Python coding environment. 

\textit{The JAX implementations}: JAX is a Python-based machine learning framework that combines autograd and XLA (accelerated linear algebra). Its structure and workflow mirror that of \verb|numpy|; however, at run time, the code can be just in time (\verb|jit|) compiled and auto-vectorised to primitive operations for significant performance improvements. Through the \verb|vmap| function, JAX will compile the functions to XLA and execute them in parallel with a GPU \cite{sabne2020xla}. Unfortunately, the discrete nature of the constant capacitance model negates the possibility of using JAX's autograd capabilities. In addition, as the sizes of all arrays must be known at just-in-time compile time, the JAX implementation of our algorithm cannot perform the thresholding algorithm.

\textit{The numerical solver}:  We made use of the OSQP (Operator Splitting Quadratic Program) solver instead of relying on those available in \verb|scipy.optimize| \cite{scipy}. The OSQP solver is optimised explicitly for constrained convex quadratic problems \cite{osqp}. OSQP is an incredibly efficient solver; only one matrix factorisation is required to setup the optimisation problem after which all operations are cheap, such that there is no need to perform any slow division operations.  As such, we found the OSQP solver faster at converging than the \verb|scipy| implementations.  

It is important to note that in \verb|QArray| we set the unit charge, $|e| = 1$, to avoid issues with numerical stability. The units of free energy are thus electron volts, and the units of our temperature parameter are Kelvin. 


\section{Benchmarks}
\label{sec: benchmarks}

\begin{figure}[h] 
    \centering
	\includegraphics[width = 0.48\textwidth]{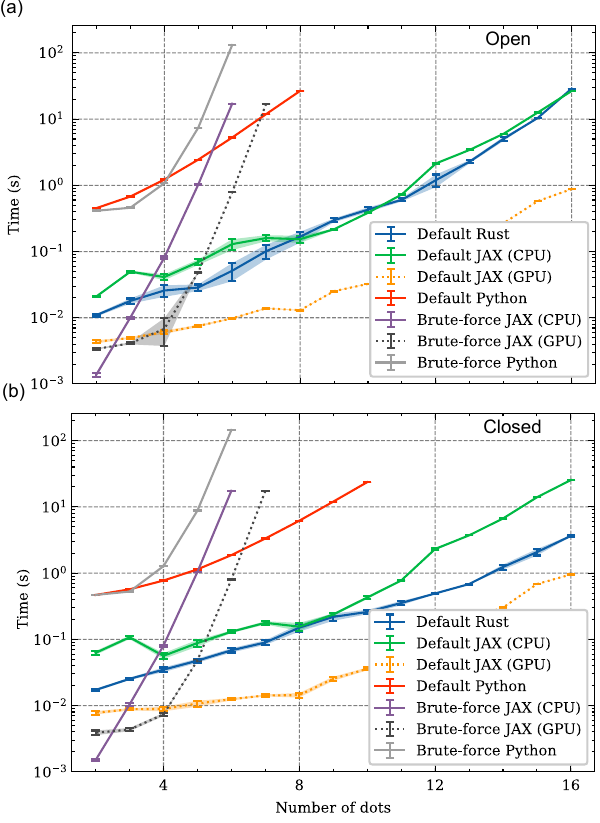}
	\caption{
	(a-b) Average computation time required by each core to generate $100 \times 100$ pixels charge stability diagrams. Simulations are performed for randomly chosen capacitance matrices and gate voltage configurations, and correspond to increasing sizes of both open (top) and closed (bottom) arrays. We ran the JAX and brute force cores on both a CPU and a GPU. For the closed quantum dot arrays, the number of confined electrons/holes matches the number of quantum dots within the array.} \label{fig: benchmarks}
\end{figure}

We benchmark the performance of each of the algorithms within \verb|QArray| for all the different implementations discussed. In order to do this, we generate a random $C$ matrix by drawing each element from a uniform distribution. This randomised matrix is converted to the Maxwell format. We then compute the ground state charge configuration over a set of $100 \times 100$ randomly chosen gate voltage configurations and record the average computation time. 

For all the implementations of the brute-force and default algorithm, we show the average compute time over many benchmarking runs executed on both the CPU within the Apple M1 Pro System on Chip (SOC) and an NVIDIA GTX 1080TI GPU (Fig.~\ref{fig: benchmarks} (a-b)). As anticipated, the default version of the algorithm exhibits superior scalability compared to the brute force approach, resulting in notably shallower performance curves. Compared to the brute force method, for arrays with more than five dots, the Rust and JAX implementations of the default algorithm are at least one order of magnitude faster and gain more time advantage as the number of dots increases. Even the Python implementation becomes faster than the GPU-accelerated brute force method for arrays with more than six quantum dots. Still, it remains several orders of magnitude slower than the Rust and JAX implementations due to the speed gained by the parallelisation and the GPU usage offered by these cores. On the CPU, our Rust and JAX cores demonstrate approximately equal performances for open dot arrays. However, for closed dot arrays, the Rust core leverages optimisations that confer enhanced performance in the simulation of large arrays. Nevertheless, the advantage of these optimisations is eclipsed by the raw computational power harnessed by the JAX core when executed on the GPU. The GPU-accelerated JAX algorithm can compute the charge stability diagram of a 16-dot array featuring $100 \times 100$ gate voltage pixels in less than a second for both open and closed configurations.

In Fig.~\ref{fig: threshold_benchmark} (a-b), we compare several runs obtained with the Rust implementation of the thresholding algorithm using different values of $t$ for open and closed regimes. As expected, smaller threshold values result in faster computation, especially in larger arrays. The Rust thresholded algorithm running on a CPU was faster than the JAX implementation on a GPU for $t<2/3$. We also note larger time gains in the open dots configuration compared to the closed case. This difference arises because when simulating quantum dot arrays in the closed regime, the threshold strategy may fail if no charge state configurations with the correct number of charges are found. In this case, the algorithm doubles the threshold value and recomputes the stability diagram. While this method ensures a correct stability diagram is generated, the additional computational overhead leads to smaller time gains than in the open dot case. 

Whilst decreasing the $t$ values yields dramatic performance gains, it should not be reduced below certain values in order to avoid artefacts in the simulated charge stability diagrams (see Appendix \ref{appendix: effect_of_tolerance}). The minimum threshold that can be set without introducing artefacts depends on the $\cdd^{-1}$ capacitance matrix. See Appendix \ref{app:opt_crit} and \ref{appen: threshold} for a detailed discussion. In a few words, we have found that the minimum threshold is of the order of the ratio of the largest off-diagonal element of the $\mathbf{c}_{dd}$ capacitance matrix to its corresponding diagonal element. Given that the diagonal elements are the dot's total capacitive coupling to every other dot and gate, whilst the off-diagonal elements are dot-dot couplings, we find the critical threshold to be much smaller than one in the case of weakly coupled dots, the approximation under which the capacitance model is justified. 

We note that these benchmarks, based on randomised gate voltages and capacitance matrices, might underestimate the potential of the thresholded algorithm. In tuning larger quantum dot arrays, virtual gates are often used to control isolated double dot subsystems whilst the other dots remain unperturbed (the $n+1$ method~\cite{Volk2019}). In this case, the threshold algorithm could neglect the charge states corresponding to the other dots. Therefore, the time required to simulate a large array operated in this way would be comparable to simulating a double dot device.

\begin{figure}[h]
	\centering
	\includegraphics{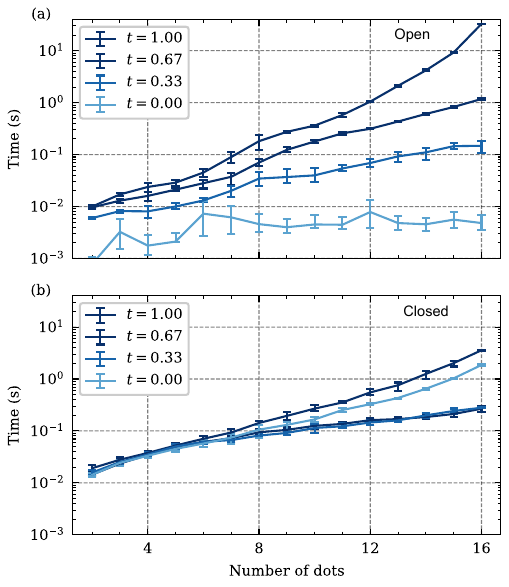}
	\caption{(a-b) Average computation time required by the Rust core to generate a $100 \times 100$ pixels charge stability diagram computed with different threshold values for increasing numbers of quantum dots both for open and closed quantum dot arrays.} 
	\label{fig: threshold_benchmark}
\end{figure}

\section{Summary and Outlook}
 
We demonstrated that our open-source software package can accelerate the simulation of charge stability diagrams of large quantum dot arrays in both open and closed regimes. The speed and GPU acceleration will aid in generating larger, more diverse datasets on which to train neural networks. This will improve the accuracy of the neural network-based classifiers used in automatic tuning approaches. With improved intuition about the charge stability diagrams, new tuning methods might be developed. In particular, the automated tuning of closed arrays could be drastically easier than for open arrays owing to the smaller number of transitions. The speed of our algorithms is particularly promising for model-based machine-learning methods and real-time interfacing with experiments, as well as for the development of hardware-in-the-loop approaches.

In future, we hope that, with help from the wider community, this package can grow to include more advanced noise models and high-level functionality. In the immediate future, more advanced algorithms could be implemented to solve constrained integer convex quadratic problems, in particular, either the \verb|SCIP| or the \verb|miOSQP| solver \cite{BolusaniEtal2024OO, BolusaniEtal2024ZR, miOSQP}. These algorithms can scale even better than the thresholded algorithm. However, parallelisation and GPU acceleration will probably not be possible, so we expect our algorithms to be faster for all but the largest arrays. Finally, as the main contribution of \verb|QArray| is in the backend, we hope that it can be used as such with application-specific user-friendly interfaces built on top. 

\section*{Code availability}

The code is available on the Python package index under \href{https://pypi.org/project/qarray/}{\texttt{qarray}}, so is pip installable with the command \verb|pip install qarray|. The associated GitHub repositories are \url{https://github.com/b-vanstraaten/qarray}. Please \href{https://github.com/b-vanstraaten/qarray/stargazers}{star} the repository and cite this paper if this package is useful. Discovered bugs can be reported using the GitHub \href{https://github.com/b-vanstraaten/qarray/issues}{issue tracker}.

\section*{Author contributions}

 B.v.S. wrote the code. B.v.S. and J.D.H wrote the documentation. B.v.S. and L.S. developed the mathematical proofs and derived the analytical results.  B.v.S., J.D.H. and N.A. conceived of creating an open-source capacitance model software package. All authors contributed to the manuscript. 

\vspace{10pt}

\section*{Acknowledgements}

This work was supported by the Royal Society, the EPSRC Platform Grant (EP/R029229/1), and the European Research Council (Grant agreement 948932).

\section*{Competing interests}

Natalia Ares declares a competing interest as a founder of QuantrolOx, which develops machine learning-based software for quantum control.

\FloatBarrier

\bibliographystyle{naturemag}
\bibliography{main.bib}

\onecolumngrid
\vfill

\newpage
\appendix

\section{Software implementations}
\label{app: implemenations}

This section lists the software implementations of each algorithm and how many charges states they consider. 

\begin{table}[h]
\centering
\begin{tabular}{|l|l|l|}
\hline
Algorithm                 & Number of charge states to be considered                                                         & Software implementations \\ \hline
Brute-force               & $(n_{\text{max}} + 1)^{n_{\text{dot}}} \text{ where } n_{\text{max}} \in \{1, 2, ..., \infty \}$ &JAX, Python        \\
\verb|Qarray|-default     & $2^{n_{\text{dot}}}$                                                                            & Rust, JAX, Python        \\
\verb|Qarray|-thresholded & $(t+1)^{n_{\text{dot}}} \text{ where } t \in [0, 1]$                                             & Rust, Python             \\ \hline
\end{tabular}
\caption{
A table showing the number of charge states needing to be considered by the software implementations of the brute-force, default and thresholded algorithms. Where $n_{\text{dot}}$ refers to the number of quantum dots in the array being simulated, is $n_{\text{max}}$ maximum number of charges considered in any dot by the brute-force algorithm and $t$ is the threshold used in the thresholded algorithms, which quantifies the degree of approximation. 
}
\label{tab:number_of_charge_states}
\end{table}

\section{Positive Definite Proof}\label{appendix: positive_definite_proof}

This section proves that the $\cdd$ is positive definite, making the optimisation problem convex. 

\begin{theorem}
	For any set of capacitances between nodes such that $c_{ij} \, \forall i,j $ are real, non-negative and $c_{ij}= c_{ji}$  the Maxwell matrix $\mathbf{C}^M \in \mathbb{R}^{(n_{\text{dot}} +n_{\text{gate}})\times (n_{\text{dot}}+n_{\text{gate}}) }$ is defined as
	\begin{equation}
	C^M_{ij} = \left (\sum_{k} c_{ik} \right)\delta_{ij} - c_{ij} ( 1 - \delta_{ij} ).
\end{equation}
Let $\mathbf{C}\in \mathbb{R}^{n_{\text{dot}} \times n_{\text{dot}} }$ be the upper left block of $\mathbf{C}^M$.  $\mathbf{C}$ is positive definite assuming the physical case of every dot being coupled to at least one gate, namely $\forall i \, \exists j >n_{dot}$ such that $c_{ij}>0$.
\end{theorem}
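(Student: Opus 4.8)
The plan is to recognise $\mathbf{C}^M$ as a (slightly generalised) weighted graph Laplacian of the capacitance network and to exploit the classical sum-of-squares representation of such matrices. First I would evaluate, for an arbitrary vector $\vec{x}\in\mathbb{R}^{n_{\text{dot}}+n_{\text{gate}}}$, the quadratic form $\vec{x}^T\mathbf{C}^M\vec{x}$ directly from the definition of $C^M_{ij}$. Pulling out the diagonal contribution and using the symmetry $c_{ij}=c_{ji}$ to symmetrise the off-diagonal part, the whole expression collapses to
\begin{equation}
  \vec{x}^T\mathbf{C}^M\vec{x} \;=\; \sum_i c_{ii}\,x_i^2 \;+\; \tfrac{1}{2}\sum_{i\neq j} c_{ij}\,(x_i-x_j)^2 .
\end{equation}
Because every $c_{ij}\geq 0$, this already shows, with no extra assumptions, that $\mathbf{C}^M$ is positive semi-definite.

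Next I would specialise to vectors supported on the dot block, writing $\vec{x}=(\vec{y},\vec{0})$ with $\vec{y}\in\mathbb{R}^{n_{\text{dot}}}$, so that $\vec{y}^T\mathbf{C}\vec{y}=\vec{x}^T\mathbf{C}^M\vec{x}$. Partitioning the double sum above according to whether each index is a dot index ($\leq n_{\text{dot}}$) or a gate index ($> n_{\text{dot}}$), the gate--gate terms vanish identically, the dot--gate terms simplify because one coordinate of $\vec{x}$ is zero, and one is left with
\begin{equation}
  \vec{y}^T\mathbf{C}\vec{y} \;=\; \sum_{i\leq n_{\text{dot}}}\Big( c_{ii} + \!\!\sum_{j>n_{\text{dot}}}\!\! c_{ij}\Big) y_i^2 \;+\; \tfrac{1}{2}\!\!\sum_{\substack{i,j\leq n_{\text{dot}}\\ i\neq j}}\!\! c_{ij}\,(y_i-y_j)^2 ,
\end{equation}
a sum of manifestly non-negative terms, so $\mathbf{C}$ is positive semi-definite as well.

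Finally, to upgrade semi-definiteness to strict definiteness I would assume $\vec{y}^T\mathbf{C}\vec{y}=0$ and conclude that every summand must vanish; in particular $\big(c_{ii}+\sum_{j>n_{\text{dot}}}c_{ij}\big)y_i^2=0$ for each dot $i$. The physical hypothesis that every dot is coupled to at least one gate ($\forall i\;\exists j>n_{\text{dot}}:c_{ij}>0$) makes the coefficient strictly positive, forcing $y_i=0$ for all $i$, hence $\vec{y}=\vec{0}$ and $\mathbf{C}$ is positive definite. The main obstacle is purely bookkeeping: getting the factors of $\tfrac12$ right when symmetrising the off-diagonal sum, correctly absorbing any self-capacitances $c_{ii}$, and tracking the dot--gate cross terms when the sum is re-partitioned into blocks; once the identity for $\vec{y}^T\mathbf{C}\vec{y}$ is established, positive definiteness is an immediate corollary of the gate-coupling assumption. (An alternative route would observe that a principal submatrix of a positive semi-definite matrix is positive semi-definite and then analyse $\ker\mathbf{C}^M$, but the explicit quadratic-form computation above is shorter and self-contained.)
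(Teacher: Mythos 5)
Your proposal is correct, and it takes a genuinely different route from the paper. The paper's proof shows that $\mathbf{C}$ is strictly diagonally dominant: since $C_{ii}=\sum_{k=1}^{n_{\text{dot}}+n_{\text{gate}}}c_{ik}$ while the off-diagonal row sum inside the dot block is only $\sum_{k\neq i}^{n_{\text{dot}}}c_{ik}$, the difference is the (strictly positive, by the gate-coupling hypothesis) total coupling of dot $i$ to the gates; positive definiteness then follows by citing the standard result that a symmetric, strictly diagonally dominant matrix with positive diagonal is positive definite (Horn--Johnson, Thm.~6.1.10). You instead write the quadratic form explicitly as a weighted-Laplacian energy, $\vec{x}^T\mathbf{C}^M\vec{x}=\sum_i c_{ii}x_i^2+\tfrac12\sum_{i\neq j}c_{ij}(x_i-x_j)^2$, restrict to $\vec{x}=(\vec{y},\vec{0})$ to get $\vec{y}^T\mathbf{C}\vec{y}=\sum_{i\le n_{\text{dot}}}\bigl(c_{ii}+\sum_{j>n_{\text{dot}}}c_{ij}\bigr)y_i^2+\tfrac12\sum_{i\neq j\le n_{\text{dot}}}c_{ij}(y_i-y_j)^2$, and read off definiteness from the vanishing of each summand; the bookkeeping you flag (the factor $\tfrac12$ from symmetrising with $c_{ij}=c_{ji}$, the dot--gate cross terms collapsing to $c_{ij}y_i^2$) works out exactly as you state. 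The paper's argument is shorter given the cited theorem; yours is fully self-contained, makes the physics transparent ($\mathbf{C}$ is a grounded Laplacian, the energy is a sum of capacitor energies with the gates held at zero), and in fact proves a slightly stronger statement at no extra cost: the dot--dot terms force $y_i=y_j$ across any capacitive link, so it suffices that every connected component of the dot-coupling graph contain at least one dot coupled to a gate, whereas the diagonal-dominance route uses the direct gate coupling of every individual dot.
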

\begin{proof}
A matrix  $\mathbf{A} \in \mathbb{R}^{n\times n}$ is called strictly diagonally dominant if for all $n$ diagonal entries $ |A_{ii}| > \sum_{k\neq i} |A_{ik}|$ holds. We start proving that $\mathbf{C} $ is strictly diagonally dominant by observing 
$$C_{ii}= \sum_{k=1}^{n_{\text{dot}}+n_{\text{gate}}} c_{ik} > \sum_{k \neq i}^{n_{\text{dot}}} |C_{ik}| = \sum_{k \neq i}^{n_{\text{dot}}} c_{ik}. $$
This holds due to the non-negative elements $c_{ij}$ and can be simplified to the true statement $\sum_{k= n_{dot}+1}^{n_{dot}+ n_{gate}} c_{ik} >0$.
Strict diagonal dominance implies positive definiteness for symmetric matrices with non-negative elements by \cite[Theorem 6.1.10]{linalgbackground}.
\end{proof}

\section{Continous minimum for open quantum dot arrays}
\label{app: continous_minimum_open}
In this section, we derive the continuous minimum charge state for open quantum dot array - the charge state that minimises \eqref{eq: free_energy} neglecting constraints \ref{constraint: one} and \ref{constraint: two}. The free energy is 
\begin{equation}
F(\vec{V}_g, \vec{Q}_d) := \frac{1}{2}  (\mathbf{c}_{dg} \vec{V_g} - \vec{Q}_d)^T \cdd ^{-1} (C_{dg} \vec{V_g} -\vec{Q}_d)
\end{equation}
For ease of notation going forward, we will write this as 
\begin{equation}
    \label{eq:lag1}
    F(\vec{V}, \vec{Q})= \frac{1}{2} (V-Q)^TC^{-1}(V-Q)
\end{equation}
where $\vec{V}:= \cgd \vec{V}_d$. Differentiating $L$ with respect to $\lambda$ yields
\begin{equation}
\label{eq:dL_dlambda}
    \frac{\partial L}{ \partial\lambda} = - Q^T\vec{1} + \hat{Q} \stackrel{!}{=} 0 \rightarrow  \hat{Q}= Q^T\vec{1}.
\end{equation}
Differentiating $F$ for $Q_i$ gives 
\begin{equation}
\label{eq:dL_dQ}
    \frac{\partial F}{\partial Q_i}= -\left[ C^{-1}(V-Q))\right]_i \stackrel{!}{=} 0 \longrightarrow Q_i= V_i
\end{equation}
Therefore, the continuous minimum for an open quantum dot array is
\begin{equation}
	\vec{Q}^{*\text{cont open}} = \vec{V}
\end{equation} 

\section{Continous minimum for closed quantum dot arrays}
\label{app: lagrange}

In this section, we derive the continuous minimum charge state for closed quantum dot arrays - the charge state that minimises \eqref{eq: free_energy} neglecting constraints \ref{constraint: one} and \ref{constraint: two}. If the array contains $\hat{Q}$ charges, we can incorporate \ref{constraint: three} through the method of Lagrangian multipliers with the Lagrangian 
\begin{equation}
L(\vec{V}_g, \vec{Q}_d, \hat{N}) := \frac{1}{2}  (\mathbf{c}_{dg} \vec{V_g} - \vec{Q}_d)^T \cdd ^{-1} (C_{dg} \vec{V_g} -\vec{Q}_d) - \lambda  \left( \sum_i Q_{di} - \hat{Q} \right).
\end{equation}
For ease of notation going forward, we will write this as 
\begin{equation}
    \label{eq:lag1}
    L(\vec{V}, \vec{Q}, \hat{Q})= \frac{1}{2} (V-Q)^TC^{-1}(V-Q) - \lambda\left( Q^T \vec{1} - \hat{Q} \right )
\end{equation}
where $\vec{V}:= \cgd \vec{V}_d$, whilst $\vec{1}$ is the one vector with all entries one, and otherwise subscripts have been dropped. Differentiating $L$ with respect to $\lambda$ yields
\begin{equation}
\label{eq:dL_dlambda}
    \frac{\partial L}{ \partial\lambda} = - Q^T\vec{1} + \hat{Q} \stackrel{!}{=} 0 \rightarrow  \hat{Q}= Q^T\vec{1}.
\end{equation}
Differentiating $L$ for $Q_i$ gives 
\begin{equation}
\label{eq:dL_dQ}
    \frac{\partial L}{\partial Q_i}= -\left[ C^{-1}(V-Q))\right]_i - \lambda \stackrel{!}{=} 0 \longrightarrow Q_i= V_i + \lambda C \vec{1},
\end{equation}
 Plugging \eqref{eq:dL_dQ} into \eqref{eq:dL_dlambda} and solving for $\lambda$ yields
\begin{equation}
    \lambda^*= \frac{\hat{Q}-  \vec{1}^TV}{\vec{1}^TC\vec{1}}.
\end{equation}
The Lagrangian in \eqref{eq:lag1} can be rewritten as 
\begin{align}
    L(\vec{V}, \vec{Q}, \hat{Q})&= \frac{1}{2}Q^TC^{-1}Q - V^TC^{-1}Q + \frac{1}{2} V^TC^{-1}V - \lambda^* \vec{1}^T Q +\lambda^*\hat{Q} \nonumber \\
    &= \frac{1}{2} \left(V + C \vec{1} \lambda^* - Q\right)^T C^{-1}\left(V + C \vec{1} \lambda^* - Q\right) + \lambda^* \hat{Q}- \lambda^* V^T \vec{1} - \frac{1}{2}\lambda^{*2} \vec{1}^T C \vec{1}
\end{align}
by using the matrix generalisation of completing the square.
Since the last three terms do not depend on $\vec{Q}$, this new Lagrangian is again quadratic in $\vec{Q}$ with the positive definite matrix $C$. Therefore, the continuous minimum is 
\begin{align}
	Q^{*\text{cont min}} &= V + C \vec{1} \lambda^{*} \nonumber \\
	&= V + \left(\hat{Q}-  \vec{1}^T \vec{V}\right)\frac{C \vec{1}}{\vec{1}^T C\vec{1}}.
\end{align}

\section{Optimality criteria for limit cases}
\label{app:opt_crit}

This section discusses whether only considering the nearest charge states to the continuous minimum is sufficient. We consider two limiting cases and prove that in these cases, it is sufficient. The optimisation problem set out in equation \eqref{eq: argmin} can be written in the form  
\begin{align}
    \argmin_{\vec{x}\in \mathbb{R}^{n}} \, &\frac{1}{2}\vec{x}^T \mathbf{M} \vec{x} + \vec{b}^T\vec{x} \label{eq:op1}\\
    \text{subject to } &x_i \geq 0  \, \, \forall i \label{eq:op2}\\
    \text{and } &x_i \in \mathbb{Z} \, \, \forall i, \label{eq:op3}
\end{align}
where $\mathbf{M}$ is positive definite matrix. In the following, we prove that if $M$ were diagonal, it would be sufficient to always round the continuous minimum to the nearest discrete change state. 

 \begin{theorem}
     The global minimiser $\vec{x}^*_{discret}$ of the optimisation problem \eqref{eq:op1},\eqref{eq:op2}, and \eqref{eq:op3} can be obtained by rounding each component of the minimiser  $\vec{x}^*_{cont}$ to the continuous optimisation problem consisting of \eqref{eq:op1} and \eqref{eq:op2}.
 \end{theorem}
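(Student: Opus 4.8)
The plan is to exploit the fact that a diagonal $\mathbf{M}$ renders both the objective and every constraint separable across coordinates, so that the $n$-dimensional integer program decouples into $n$ independent one-dimensional problems solvable in closed form. Writing $\mathbf{M}=\mathrm{diag}(m_1,\dots,m_n)$ with each $m_i>0$ by positive definiteness, we have $\tfrac12\vec{x}^T\mathbf{M}\vec{x}+\vec{b}^T\vec{x}=\sum_{i=1}^n f_i(x_i)$ with $f_i(t):=\tfrac12 m_i t^2+b_i t$, a strictly convex parabola with vertex $v_i:=-b_i/m_i$. The feasible set of the continuous problem \eqref{eq:op1}--\eqref{eq:op2} is the product $\prod_i[0,\infty)$, and that of the integer problem \eqref{eq:op1}--\eqref{eq:op3} is $\prod_i(\mathbb{Z}\cap[0,\infty))$; hence it suffices to prove, coordinate by coordinate, that rounding the one-dimensional continuous minimiser to the nearest integer yields the one-dimensional integer minimiser.

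Next I would solve the one-dimensional problem explicitly. The minimiser of $f_i$ on $[0,\infty)$ is the clamped vertex $x_i^{*\mathrm{cont}}=\max\{0,v_i\}$. If $v_i\le 0$, then $f_i'(t)=m_i(t-v_i)\ge 0$ for all $t\ge 0$, so $f_i$ is nondecreasing on $[0,\infty)$ and its minimiser over $\mathbb{Z}\cap[0,\infty)$ is $0$, which is exactly what rounding $x_i^{*\mathrm{cont}}=0$ returns. If $v_i>0$, completing the square gives $f_i(k)=\tfrac12 m_i(k-v_i)^2-\tfrac12 m_i v_i^2$ for every integer $k$, so $f_i(k)$ is increasing in $(k-v_i)^2$ and is therefore minimised over all of $\mathbb{Z}$ at the integer nearest to $v_i$; since that integer is nonnegative whenever $v_i>0$, it is also the minimiser over $\mathbb{Z}\cap[0,\infty)$. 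In both cases the one-dimensional integer optimum equals $\lfloor x_i^{*\mathrm{cont}}\rceil$, the nearest integer, with half-integer ties broken arbitrarily.

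Finally I would recombine. Let $\hat{\vec{x}}$ be the coordinatewise rounding of $\vec{x}^*_{cont}$; it is integer by construction and nonnegative by the case analysis, hence feasible for \eqref{eq:op1}--\eqref{eq:op3}. Summing the per-coordinate minima, $\sum_i f_i(\hat{x}_i)=\sum_i\min_{x_i\in\mathbb{Z}\cap[0,\infty)}f_i(x_i)=\min_{\vec{x}\in(\mathbb{Z}\cap[0,\infty))^n}\sum_i f_i(x_i)$, so $\hat{\vec{x}}$ attains the global minimum and therefore equals $\vec{x}^*_{discret}$ (up to the harmless ties).

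The obstacles here are matters of bookkeeping rather than depth: one must check that rounding never violates nonnegativity --- handled by the clamp-at-zero case, in which $v_i\le 0$ forces $x_i^{*\mathrm{cont}}=0$ and the discrete optimum is $0$ as well --- and that half-integer ties do not break optimality, which they cannot because a symmetric parabola takes equal values at points equidistant from its vertex. It is worth emphasising that none of this survives once $\mathbf{M}$ has a nonzero off-diagonal entry: then the objective no longer separates, the nearest lattice point to $\vec{x}^*_{cont}$ can have strictly larger objective than a farther lattice point, and one genuinely needs the $2^{n_{\mathrm{dot}}}$-neighbour enumeration of the default algorithm. This is precisely why the theorem is confined to the diagonal limit.
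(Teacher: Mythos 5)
Your proof is correct and follows essentially the same route as the paper's: exploit diagonality to decouple into $n$ independent one-dimensional parabolic problems, treat the clamped-at-zero case and the interior-vertex case separately, and use the symmetry of the parabola to conclude that nearest-integer rounding is optimal in each coordinate. You are in fact somewhat more careful than the paper, spelling out that the rounded value stays nonnegative and that half-integer ties are harmless, but the underlying argument is identical.
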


 \begin{proof}
     Since $\mathbf{M}$ is diagonal, the optimisation problem uncouples to $n$ independent one-dimensional optimisation problems of the form $\min_{x_i} 1/2 M_{ii} x_i^2 + b_ix_i$ with the non-negativity constraint $x_i \geq 0$ and the integer constraint $x_i \in \mathbb{Z}$.
     Now we show that rounding the solution $x^*_{i,cont}$ from the one-dimensional continuous optimisation problem to the closest integer yields the global minimiser $x^*_{i,discret}$.
     If $x^*_{i,cont}= 0$ the integer constraint is already fulfilled. If $x^*_{i,cont}> 0$, then the parabola is symmetric around $x^*_{i,cont}$. Therefore, rounding to the closest integer minimises the one-dimensional integer-constrained optimisation problem.
 \end{proof}
 
With small perturbations away from this diagonal case, we cannot always round to the nearest charge to reliably find the lowest energy configuration. 
Nevertheless, we can guarantee that the minimiser is one of the $2^{n_{n}}$ surrounding neighbours around the continuous solution as long as the condition number $\kappa(\mathbf{M}) = \lambda_{max}/ \lambda_{min}$ is sufficiently low. intuitively, a low condition number corresponds to an almost spherical symmetric potential.

\begin{theorem}
    If $\kappa(\mathbf{M}) \leq 1+ 4/n$, then one of the $2^n$ surrounding neighbours of the continuous minimiser $x^*_{cont}$ is the global minimiser $x^*_{discret}$ of the integer optimisation problem.
\end{theorem}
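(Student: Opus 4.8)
\emph{Reduction.} The plan is to work with the shifted, purely quadratic objective. Since $\mathbf M$ is positive definite, completing the square rewrites \eqref{eq:op1} (up to an additive constant) as $g(\vec x):=(\vec x-\vec c)^{T}\mathbf M(\vec x-\vec c)$ with $\vec c:=\vec x^{*}_{cont}$; in the generic case $\vec c$ lies in the interior of the non-negative orthant, so it is the unconstrained minimiser and minimising $g$ over $\vec x\in\mathbb Z^{n}$ (the constraint $\vec x\ge\vec 0$ being inactive near $\vec c$) is exactly the problem at hand, and I assume no $c_i$ is an integer. I will write $\lambda_{\min}\le\lambda_{\max}$ for the eigenvalues of $\mathbf M$, $\theta_i:=c_i-\lfloor c_i\rfloor$, and $\mathbf P:=\mathbf M-\lambda_{\min}\mathbf I\succeq0$, so that $\|\mathbf P\|=\lambda_{\max}-\lambda_{\min}=\lambda_{\min}(\kappa-1)$. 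Let $\vec z^{*}$ be a global integer minimiser and $\vec r$ the coordinate-wise nearest integer to $\vec c$ (one of the $2^{n}$ neighbours, with $\|\vec r-\vec c\|^{2}\le n/4$). The first step is the a priori bound coming from optimality of $\vec z^{*}$ against $\vec r$:
\begin{equation}
\lambda_{\min}\|\vec z^{*}-\vec c\|^{2}\le g(\vec z^{*})\le g(\vec r)\le\lambda_{\max}\|\vec r-\vec c\|^{2}\le\tfrac14\lambda_{\max}n,\qquad\text{hence}\qquad\|\vec z^{*}-\vec c\|^{2}\le\tfrac14\kappa n.
\end{equation}

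\emph{Two cases on the overshoot.} Suppose toward a contradiction that $\vec z^{*}$ is not a neighbour; then the overshoot set $S:=\{i: z^{*}_i\notin\{\lfloor c_i\rfloor,\lceil c_i\rceil\}\}$ is nonempty, $|z^{*}_i-c_i|>1$ for $i\in S$, and I would split on $m:=|S|$. If $m\ge\tfrac14\kappa n$, then $g(\vec z^{*})\ge\lambda_{\min}\sum_{i\in S}(z^{*}_i-c_i)^{2}>\lambda_{\min}m\ge\tfrac14\lambda_{\min}\kappa n=\tfrac14\lambda_{\max}n\ge g(\vec r)$, contradicting optimality, so it remains to exclude $1\le m<\tfrac14\kappa n$. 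Here I would invoke coordinate-wise stationarity: for each $i\in S$ the value $z^{*}_i$ is the integer nearest to $t_i:=c_i-M_{ii}^{-1}\sum_{l\ne i}M_{il}(z^{*}_l-c_l)$, and $|z^{*}_i-c_i|>1$ forces $|t_i-c_i|>\tfrac12$, i.e.\ $\bigl|\sum_{l\ne i}M_{il}(z^{*}_l-c_l)\bigr|>\tfrac12 M_{ii}\ge\tfrac12\lambda_{\min}$. Splitting this sum over $l\notin S$ (where $|z^{*}_l-c_l|<1$ and $\sum_{l\ne i}M_{il}^{2}=[\mathbf P^{2}]_{ii}\le\|\mathbf P\|^{2}$) and $l\in S\setminus\{i\}$, Cauchy--Schwarz together with the a priori bound gives $\tfrac12\lambda_{\min}<\lambda_{\min}(\kappa-1)\bigl(\sqrt n+\tfrac12\sqrt{\kappa n}\bigr)$, hence $\kappa-1>\bigl[\sqrt n\,(2+\sqrt\kappa)\bigr]^{-1}$, which is incompatible with $\kappa\le1+4/n$ for $n$ not too small. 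An essentially equivalent route is to compare $\vec z^{*}$ directly with the neighbour $\vec w$ obtained by clamping each coordinate of $\vec z^{*}$ into its cell: writing $\vec p:=\vec z^{*}-\vec w$ (supported on $S$, integer, and sign-aligned with $\vec w-\vec c$ there), one computes $g(\vec z^{*})-g(\vec w)=\vec p^{T}\mathbf M\bigl(\vec p+2(\vec w-\vec c)\bigr)\ge\lambda_{\min}\|\vec p\|^{2}-2\|\mathbf P\|\,\|\vec p\|\,\|\vec w-\vec c\|$, which is $\ge0$ — so that the neighbour $\vec w$ is also optimal — once $\|\vec p\|\ge2(\kappa-1)\|\vec w-\vec c\|$, and the lower bounds on $\|\vec p\|$ (from $m$ and from stationarity) together with $\|\vec w-\vec c\|\le\sqrt n$ deliver this.

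\emph{The hard part.} The ingredients above — completing the square, the nearest-neighbour comparison, Cauchy--Schwarz, coordinate-wise stationarity — are routine; the real obstacle is getting the constant to be exactly $4$. Tracked crudely, the chain only yields $\kappa\le1+c/\sqrt n$ for a modest constant $c$, which already proves the theorem for all large $n$ but falls short for small and moderate $n$ — and the bound is essentially tight already at $n=2$, $\kappa=3$, where the one-overshoot configuration is excluded only on the nose. Closing the gap calls for being economical about where the slack sits: bounding $\|\vec r-\vec c\|$ and $\|\vec w-\vec c\|$ through the fractional parts $\theta_i$ rather than through $\sqrt n$, using that the stationarity inequality must hold \emph{simultaneously} for every $i\in S$ so the cross-correlations cannot all be worst case, and tuning the split so that the large-$m$ and small-$m$ estimates meet precisely at $m\simeq\tfrac14\kappa n$. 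Encapsulating the small-$m$ part as the single assertion that coordinate-wise projection onto the cell never increases $g$ when $\kappa\le1+4/n$ is a convenient way to keep the bookkeeping uniform in $m$.
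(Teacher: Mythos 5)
Your proposal has a genuine gap, and you flag it yourself: the case $1 \le m < \tfrac14\kappa n$ is never actually closed. The chain of a priori bound, coordinate-wise stationarity, clamping and Cauchy--Schwarz only excludes a non-neighbour optimum when $\kappa - 1$ exceeds roughly $1/(3\sqrt n)$, i.e.\ it proves the conclusion only under $\kappa \le 1 + O(1/\sqrt n)$. That implies the stated theorem (hypothesis $\kappa \le 1 + 4/n$) only once $n$ is of order a hundred or more, whereas the regime the theorem is used in here --- a handful up to a few tens of dots --- is exactly the one left open; the ideas you sketch for ``getting the constant to be exactly $4$'' are not carried out, so as written the argument establishes a weaker statement.

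The missing idea is that no information about the global integer minimiser (overshoot set, stationarity, clamping) is needed at all: the eigenvalue sandwich alone suffices, provided you compare against the \emph{nearest non-neighbour lattice point} rather than against $\vec z^*$. With $g(\vec y)=\tfrac12\vec y^{T}\mathbf M\vec y$, $\vec y=\vec x-\vec x^{*}_{cont}$, one has $\lambda_{\min}\|\vec y\|^{2}\le 2g(\vec y)\le\lambda_{\max}\|\vec y\|^{2}$. Let $\vec a$ be the displacement to the coordinate-wise rounded point, so $\|\vec a\|^{2}\le n/4$, and let $\vec b$ be the displacement to the closest integer point that is not a vertex of the surrounding cell; $\vec b$ agrees with $\vec a$ in all coordinates except one, where the distance grows from $d\ge 0$ to $1+d$, hence $\|\vec b\|^{2}\ge\|\vec a\|^{2}+1$. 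Every non-neighbour lattice point $\vec x$ then satisfies $2g(\vec x-\vec x^{*}_{cont})\ge\lambda_{\min}\|\vec b\|^{2}$, and this dominates $\lambda_{\max}\|\vec a\|^{2}\ge 2g(\vec a)$ exactly when $\kappa\le\|\vec b\|^{2}/\|\vec a\|^{2}$, which holds since $\|\vec b\|^{2}/\|\vec a\|^{2}\ge 1+1/\|\vec a\|^{2}\ge 1+4/n$. So under the hypothesis the rounded vertex already ties or beats every non-neighbour, which is the paper's two-line proof; the constant $4$ is nothing more than $1/(n/4)$, and the operator-norm and Cauchy--Schwarz estimates --- precisely the steps where your constants degrade from $1/n$ to $1/\sqrt n$ --- never enter.
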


\begin{proof}
    We proceed by showing that even the closest integer point apart from the surrounding neighbours of the continuous minimiser has a larger objective function value than one of the surrounding neighbours for any matrix $\mathbf{M}$ as long as the condition number is sufficiently low.
    Let $\vec{a}$ be the vector from $\vec{x}^*_{cont}$ to the closest integer neighbour and $\vec{b}$ the vector from $\vec{x}^*_{cont}$ to the closest integer point which is not a surrounding neighbour of $\vec{x}^*_{cont}$.
    Without loss of generality, we can assume that $\vec{a}$ and $\vec{b}$ have the form $\vec{a}=[\vec{c}; d] $ and $\vec{a}= [\vec{c};d+1]$. ( The notation means that $\vec{a}$ has the same entries as $\vec{c}$ with an additional appended entry $d$.)

    Let $g(\vec{y})= \vec{y}^T\mathbf{M}\vec{y}/2$ with $\vec{y}= \vec{x}- \vec{x}^*_{cont}$ such that $||\vec{y}||^2 \lambda_{min} \leq g(\vec{y}) \leq ||\vec{y}||^2 \lambda_{max} $ holds $\forall \vec{y}$.
    We want to show that $g(\vec{a}) \leq g(\vec{b})$ which is necessarily true if $||\vec{a}||^2\lambda_{max} \leq ||\vec{b}||^2 \lambda_{min}$. This yields
    \begin{equation}
        \frac{\lambda_{max}}{\lambda_{min}}\leq \frac{||\vec{c}||^2 +(d+1)^2}{||\vec{c}||^2+ d^2}.
    \end{equation}
    It holds that $||\vec{c}||^2  \leq (n-1)/4$ since $||\vec{a}||^2$ is defined to be the vector to the closest neighbour and therefore $|a_i| \leq 1/2 \, \forall i$. By incorporating this bound, we obtain the desired result of $\lambda_{max}/\lambda_{min} \leq 1+4/n$.
\end{proof}

We showed that under certain conditions, the discrete minimiser is among the surrounding neighbours. 
Intuitively we can do better, namely, if the continuous minimiser is close to an integer in one variable, then the discrete minimiser can be obtained by rounding to the closest integer. The following theorem derives bounds when rounding single coordinates leads to optimal solutions.

\begin{theorem}
    Let element $i$ of $\vec{x}^*_{cont}$ be in the interval $[k,k+\delta]$ for an integer $k$. If the condition number $\kappa$ fulfills
    \begin{equation}
\delta \leq \frac{\sqrt{\kappa^2-(n-1)(\kappa^2-1)^2}-1}{\kappa^2-1},
    \end{equation}
    the $i$-th element of the discrete minimiser is $k$.
\end{theorem}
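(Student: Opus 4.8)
The plan is to build on the preceding theorem, which already confines the integer minimiser $\vec{x}^*_{discret}$ to one of the $2^n$ vertices of the axis-aligned unit cube around $\vec{x}^*_{cont}$, and then to rule out the face of that cube on which the $i$-th coordinate equals $k+1$. First I would dispatch two pieces of elementary book-keeping. The right-hand side of the displayed bound is real and non-negative only when $\kappa^2-(n-1)(\kappa^2-1)^2\ge 1$, i.e.\ (dividing by $\kappa^2-1>0$, the case $\kappa=1$ being trivial by the diagonal-case theorem) when $(n-1)(\kappa^2-1)\le 1$; a one-line estimate shows this implies $\kappa\le 1+4/n$, so the hypothesis of the previous theorem holds and $\vec{x}^*_{discret}$ is one of the surrounding neighbours. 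The same bound gives $\delta<\tfrac12$, so $k$ is the integer nearest $x^*_{i,cont}$, and hence the $i$-th coordinate of every surrounding neighbour — in particular of $\vec{x}^*_{discret}$ — lies in $\{k,k+1\}$.

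Now suppose, for contradiction, that this coordinate is $k+1$, and set $\vec{x}'=\vec{x}^*_{discret}-\vec{e}_i$. Its $i$-th entry is $k\ge 0$ (since $x^*_{i,cont}\ge 0$ and $\delta<1$ force $k=\lfloor x^*_{i,cont}\rfloor\ge 0$), so $\vec{x}'$ is feasible, and optimality of $\vec{x}^*_{discret}$ gives $F(\vec{x}^*_{discret})\le F(\vec{x}')$; I will contradict this. As in the previous proofs, shift the objective so that $F(\vec{z})-F(\vec{x}^*_{cont})=\tfrac12(\vec{z}-\vec{x}^*_{cont})^{T}\mathbf{M}(\vec{z}-\vec{x}^*_{cont})$, and put $\vec{y}=\vec{x}^*_{discret}-\vec{x}^*_{cont}$, $\vec{y}'=\vec{y}-\vec{e}_i$. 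These vectors agree in every coordinate but the $i$-th, where $y_i=(k+1)-x^*_{i,cont}\in[1-\delta,1]$ and $y'_i=k-x^*_{i,cont}\in[-\delta,0]$. Writing $\vec{c}$ for the common $(n-1)$-dimensional block, $\|\vec{y}\|^2=\|\vec{c}\|^2+y_i^2\ge\|\vec{c}\|^2+(1-\delta)^2$ and $\|\vec{y}'\|^2=\|\vec{c}\|^2+(y'_i)^2\le\|\vec{c}\|^2+\delta^2$, while $\|\vec{c}\|^2<n-1$ because each coordinate of a surrounding neighbour is within distance $1$ of $\vec{x}^*_{cont}$.

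The heart of the argument is then to compare $F(\vec{x}^*_{discret})$ and $F(\vec{x}')$ by sandwiching the quadratic form between its extreme eigenvalues: $F(\vec{x}^*_{discret})>F(\vec{x}')$ follows once $\kappa^2\big(\|\vec{c}\|^2+\delta^2\big)\le\|\vec{c}\|^2+(1-\delta)^2$, the factor $\kappa^2$ being the price of bounding $\mathbf{M}$ above and below by $\lambda_{max}$ and $\lambda_{min}$ on the two competing points. Since the difference of the two sides is increasing in $\|\vec{c}\|^2$, the worst case is $\|\vec{c}\|^2=n-1$, where the requirement rearranges to the quadratic inequality $(\kappa^2-1)\delta^2+2\delta+(n-1)(\kappa^2-1)-1\le 0$ in $\delta$. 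This upward-opening quadratic is negative at $\delta=0$ exactly when $(n-1)(\kappa^2-1)<1$ — the regime in which the claimed bound is positive — and its larger root is precisely $\big(\sqrt{\kappa^2-(n-1)(\kappa^2-1)^2}-1\big)/(\kappa^2-1)$. Hence the hypothesis on $\delta$ makes the inequality hold (strictly, using $\|\vec{c}\|^2<n-1$), so $F(\vec{x}')<F(\vec{x}^*_{discret})$ — the desired contradiction — and $\vec{x}^*_{discret}$ has $i$-th coordinate $k$.

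The step I expect to be delicate is the middle one: the bound $\|\vec{c}\|^2<n-1$ on the remaining coordinates genuinely relies on the previous theorem, one must check that worst-casing $\|\vec{c}\|^2$ at the top of its range is legitimate, and the eigenvalue sandwich must be tracked tightly enough to land on exactly $\kappa^2$ rather than a looser constant. The rest — feasibility of $\vec{x}'$, the reduction to $\{k,k+1\}$, and solving the quadratic — is routine.
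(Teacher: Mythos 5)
Your proof is correct and follows essentially the same route as the paper's: you compare the candidate whose $i$-th entry is $k+1$ against the ``opposite'' point with entry $k$ via the sandwich $\lambda_{min}\|\vec{y}\|^2 \le \vec{y}^T\mathbf{M}\vec{y} \le \lambda_{max}\|\vec{y}\|^2$, reduce to the quadratic $(\kappa^2-1)\delta^2+2\delta-1+(\kappa^2-1)\|\vec{c}\|^2\le 0$, and worst-case $\|\vec{c}\|^2$ at $n-1$, recovering exactly the stated root; you are in fact more explicit than the paper about feasibility of the comparison point and about why the hypothesis places you in the regime $\kappa\le 1+4/n$ where the preceding theorem confines the minimiser to the surrounding neighbours. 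One minor remark: the eigenvalue sandwich actually costs a factor $\kappa$, not $\kappa^2$, so your sufficient condition (like the paper's own derivation, which introduces an extra square root at the same step) is conservative rather than sharp, but this only strengthens the hypothesis and leaves the argument valid.
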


\begin{proof}
    We proceed in a similar way as beforehand by showing that no integer point with $k$ at its $i$-th element has a higher value than the ``opposite'' point with entry $k+1$.
    With no loss of generality, let $\vec{a}= [\vec{c}; x]$ and $\vec{b}= [\vec{c}; 1-x]$ be the vectors from the continuous minimiser to two opposite integer points. Plugging this into $||\vec{a}||^2 \lambda_{max} \leq ||\vec{b}||^2 \lambda_{min}$ yields 
    \begin{equation}
        \frac{\lambda_{max}}{\lambda_{min}}\leq \sqrt{\frac{||\vec{c}||^2 + (1-\delta)^2}{||\vec{c}||^2+\delta^2}}.        
    \end{equation}
    Rearranging yields the quadratic  $(\kappa^2-1) \delta^2 + 2\delta -1 + (\kappa^2-1) ||\vec{c}||^2\leq 0 $. Solving that for $\delta$ and observing that $||\vec{c}||^2\leq n-1$ yields the desired result.
\end{proof}

Under this condition, we can guarantee that certain surrounding nodes can be omitted from checking and can, therefore, choose a threshold of $t= 1- 2\delta$. Experimentally, however, choosing lower thresholds still yields optimal results, e.g. $t= ||\mathbf{\Delta}||_2$ - we motivate this choice in the following section.

\section{Physical motivation for diagonal and spherical limit case } \label{appen: threshold}

The constant interaction model is a good approximation when the quantum dots interact weakly, and the associated tunnel coupling is small. As a result, we should expect the interdot capacitive coupling to be smaller than the coupling to the nearest gates; the diagonal elements will be considerably larger than the off-diagonals. As a result, we can write 
\begin{equation}
	\cdd = \mathbf{D}(\mathbf{I} - \mathbf{\Delta})
\end{equation}
where $\mathbf{D}$ is a diagonal matrix, $\mathbf{I}$ is the identity matrix, and the elements $\mathbf{\Delta}$ are given by the ratio of the off-diagonal terms to the on. The inverse of this matrix can be approximated using the Neumann expansion
\begin{equation}
	\cdd^{-1} = \sum_{n=0}^{\infty} \mathbf{\Delta}^n \mathbf{D}^{-1} \approx (\mathbf{I} + \mathbf{\Delta})\mathbf{D}^{-1}.
\end{equation}
Therefore, for small $\Delta$ where this approximation is valid, the $\cdd^{-1}$ matrix is diagonally dominant and is a perturbation from the diagonal matrix. If $\cdd$ were diagonal, the charges in the quantum dots would not interact. In this case, the lowest energy discrete charge configuration could be found by rounding the continuous solution to the nearest integer charge, as we prove in the next section. 

It follows that for small but non-zero $\mathbf{\Delta}$ rounding will yield the lowest energy charge state except in the most ambiguous cases where a fractional component of the ith element of the continuous minimum, $N_i^{*}$, is close to $1/2$. Therefore, the threshold should be proportional to some $\mathbf{\Delta}$ norm, for example 
\begin{equation}
	t / 2 = ||\mathbf{\Delta}||_2.
\end{equation}

\section{Number of charge states}
\label{sec: number_of_charge_states}

\begin{theorem}
	The expected number of charge states needing to be considered by the thresholded algorithm with a threshold $t$ is given by $(1 + t)^{n_{\text{dot}}}$ where $n_{\text{dot}}$ is the number of quantum dots. 
\end{theorem}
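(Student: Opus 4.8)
The plan is to exploit the fact that the set of charge states examined by the thresholded algorithm is a Cartesian product over the dots, so that its cardinality factorises. For each dot $i$, the algorithm keeps either a single value (the rounded one) or two values (both $\lfloor N_i^{*\text{cont}}\rfloor$ and $\lceil N_i^{*\text{cont}}\rceil$), the latter occurring precisely when $\bigl|\,\mathrm{frac}[N_i^{*\text{cont}}]-1/2\,\bigr|\le t/2$. Hence if $k_i\in\{1,2\}$ denotes the number of retained integers for dot $i$, the number of charge states evaluated is $\prod_{i=1}^{n_{\text{dot}}} k_i$. The whole statement then reduces to computing the expectation of this product.

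First I would fix the probabilistic model: over randomly chosen gate voltages (and a generic capacitance matrix), the fractional part $\mathrm{frac}[N_i^{*\text{cont}}]$ is taken to be uniformly distributed on $[0,1)$, and the fractional parts of distinct dots are treated as independent. Under the uniform model, the probability that dot $i$ triggers the two-state branch is the length of the interval $[\tfrac12-\tfrac t2,\tfrac12+\tfrac t2]$, namely $t$, so
\begin{equation}
\mathbb{E}[k_i] = 2\cdot t + 1\cdot(1-t) = 1+t .
\end{equation}
Next, using independence across dots and multiplicativity of expectation for independent random variables,
\begin{equation}
\mathbb{E}\!\left[\prod_{i=1}^{n_{\text{dot}}} k_i\right] = \prod_{i=1}^{n_{\text{dot}}} \mathbb{E}[k_i] = (1+t)^{n_{\text{dot}}},
\end{equation}
which is the claimed count.

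The main obstacle is not the algebra but the justification of the modelling assumptions: in general the components of $\vec{N}^{*\text{cont}} = \cgd\vec{V}_g$ (open case) or its closed-array analogue are affine functions of the gate voltages, so for a dense, ``generic'' sweep of $\vec{V}_g$ each fractional part equidistributes on $[0,1)$ by a standard Weyl-type argument, and for a generic $\cdd$ the linear forms defining different components are rationally independent, which is what licenses treating the $k_i$ as independent. I would state this as the regime in which the estimate holds (``expected'' over random voltages/capacitances, as in the benchmarks), note that pathological aligned sweeps can violate it, and otherwise keep the proof at the level of the two displayed equations above.
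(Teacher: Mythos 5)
Your proposal is correct and follows essentially the same argument as the paper's proof: model the fractional part of each component of the continuous minimum as uniform on $[0,1)$, so each dot contributes an expected $t\cdot 2 + (1-t)\cdot 1 = 1+t$ candidate values, then use independence across dots to multiply and obtain $(1+t)^{n_{\text{dot}}}$. Your added remarks on when the uniformity and independence assumptions are justified (equidistribution under generic sweeps) go slightly beyond the paper, which simply asserts them, but the core route is the same.
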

\begin{proof}
	For random uniformly distributed gate voltages, the continuous minimum can also be considered uniformly distributed. Therefore, in 1d, the probability that the algorithm has to consider one charge state is $1-t$, and the probability that it has to consider two is $t$. Therefore, the expected number of points is $\mathbb{E}[N_{1d}] = t \cdot 2 + (1 - t) \cdot 1 = 1 + t$. As the dimensions are independent $\mathbb{E}[N_{Nd}] = \mathbb{E}[N_{1d}]^N$. Therefore, if there are $n_{\text{dot}}$ quantum dots the expected number of charge states is $(1 + t)^{n_{\text{dot}}}$
\end{proof}

\section{Finding optimal voltage for a given charge state}
\label{app: optimal_gate_voltages}

\begin{theorem}
	The gate voltages that minimise the charge state $\vec{Q}_d$ free energy, $F(\vec{V}_g, \vec{Q}_d)$ is 
\begin{equation}\label{eq: optimal_gate_voltages}
	\vec{V}^* = 
 \left(\mathbf{R} \cgd\right)^{+} \mathbf{R} \vec{Q}_d, 
\end{equation}
where $\mathbf{R}^T\mathbf{R}= \cdd^{-1}$ is the Cholesky factorisation and $+$ denotes the Moore-Penrose pseudo inverse.
\end{theorem}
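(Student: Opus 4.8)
The plan is to rewrite the minimisation over $\vec{V}_g$ as a linear least-squares problem and then quote the standard characterisation of its minimum-norm solution via the Moore--Penrose pseudo-inverse. First I would start not from Eq.~\eqref{eq: free_energy} but from the complete quadratic form of the free energy,
\begin{equation}
	F(\vec{V}_g, \vec{Q}_d) = \tfrac{1}{2}\left(\cgd \vec{V}_g - \vec{Q}_d\right)^{T} \cdd^{-1}\left(\cgd \vec{V}_g - \vec{Q}_d\right),
\end{equation}
since Eq.~\eqref{eq: free_energy} was obtained from this by discarding the terms with no $\vec{Q}_d$-dependence; those terms are $\vec{V}_g$-dependent and hence must be retained when we optimise over $\vec{V}_g$ at fixed $\vec{Q}_d$.

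Next I would insert the Cholesky factorisation $\cdd^{-1} = \mathbf{R}^{T}\mathbf{R}$, which exists and has $\mathbf{R}$ invertible because $\cdd$, and therefore $\cdd^{-1}$, is positive definite (Appendix~\ref{appendix: positive_definite_proof}). This turns the free energy into a squared Euclidean norm,
\begin{equation}
	F(\vec{V}_g, \vec{Q}_d) = \tfrac{1}{2}\left\lVert \mathbf{R}\cgd \vec{V}_g - \mathbf{R}\vec{Q}_d \right\rVert_{2}^{2},
\end{equation}
and, $\mathbf{R}$ being invertible, nothing is lost in this rewriting: minimising $F$ over $\vec{V}_g$ is exactly the least-squares problem $\min_{\vec{x}} \lVert \mathbf{A}\vec{x} - \vec{b} \rVert_{2}^{2}$ with $\mathbf{A} = \mathbf{R}\cgd$ and $\vec{b} = \mathbf{R}\vec{Q}_d$. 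I would then invoke the textbook fact that the minimisers of $\lVert \mathbf{A}\vec{x} - \vec{b} \rVert_{2}^{2}$ are precisely the solutions of the normal equations $\mathbf{A}^{T}\mathbf{A}\vec{x} = \mathbf{A}^{T}\vec{b}$, and that the unique one of least Euclidean norm is $\vec{x}^{*} = \mathbf{A}^{+}\vec{b}$; substituting back gives $\vec{V}^{*} = (\mathbf{R}\cgd)^{+}\mathbf{R}\vec{Q}_d$, which is the claim.

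The one point I would be careful to flag — more a caveat than a genuine obstacle — is the non-uniqueness of the minimiser. Quantum dot arrays generically have at least as many gates as dots, so $\mathbf{A} = \mathbf{R}\cgd$ is wide with non-trivial null space, and an entire affine subspace of gate-voltage vectors attains the minimal free energy; the pseudo-inverse formula is the natural choice because it returns the minimum-norm representative of that subspace, which is the sense in which $\vec{V}^{*}$ is ``the'' optimal voltage. It is also worth noting, as a consistency check, that whenever $\vec{Q}_d$ lies in the column space of $\cgd$ the residual vanishes and $F(\vec{V}^{*}, \vec{Q}_d) = 0$, recovering the exactly-solvable regime, while otherwise $\vec{V}^{*}$ yields the best achievable electrostatic compromise.
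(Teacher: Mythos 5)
Your proposal is correct and follows essentially the same route as the paper's own proof: writing the full quadratic free energy $F = \tfrac{1}{2}(\vec{Q}_d - \cgd\vec{V}_g)^T\cdd^{-1}(\vec{Q}_d - \cgd\vec{V}_g)$, inserting the Cholesky factorisation $\cdd^{-1} = \mathbf{R}^T\mathbf{R}$ to obtain the least-squares problem $\min_{\vec{V}_g}\lVert\mathbf{R}\vec{Q}_d - \mathbf{R}\cgd\vec{V}_g\rVert_2$, and solving via the Moore--Penrose pseudo-inverse. Your added remarks on retaining the $\vec{V}_g$-dependent terms and on the minimum-norm interpretation when the system is underdetermined are sound clarifications rather than deviations.
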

\begin{proof}
The free energy is given by 
\begin{equation}
    F(\vec{V}_g, \vec{N}_d) = \frac{1}{2}\left(\vec{Q}_d - \cgd \vec{V}_g \right)^T \mathbf{C}^{-1} \left(\vec{Q}_d - \cgd \vec{V}_g \right).
\end{equation}
Using the Cholesky factorisation $\cdd^{-1}= R^TR$, the minimisation problem can be reformulated as the following linear least-squares problem
\begin{equation}
    \min_{\vec{V}_g} ||\mathbf{R}\vec{Q}_d- \mathbf{R}\cgd\vec{V}_g||_2.
\end{equation}
Solving this via the Moore-Penrose pseudo inverse yields the desired result.
\end{proof}

\section{Virtual gates}
\label{app: virtual_gates}

This section derives how to construct virtual gates, namely determining how to change the gate voltages to obtain a specific change in the dot potentials.

\begin{theorem}
	The optimal virtual gate matrix, $\mathbf{\alpha}$, used to construct virtual gates is 
	\begin{equation}
		\mathbf{\alpha} = (\cgd \cdd^{-1})^+
		\end{equation}
		where $+$ denotes the Moore-Penrose pseudoinverse. The $i$th row of this matrix encodes the electrostatic gate voltages required to change the $i$th dot's potential. 
		\end{theorem}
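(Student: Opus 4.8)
The plan is to read $\mathbf{\alpha}$ off the linearised map between the physical gate voltages and the dot potentials and then invert that map in the minimum-norm least-squares sense, in close analogy with the optimal-gate-voltage result (Appendix~\ref{app: optimal_gate_voltages}). Working at a fixed charge configuration, so that $\vec{Q}_d$ is constant, Eq.~\eqref{eq: potential} gives the affine relation $\vec{V}_d=\cdd^{-1}\vec{Q}_d-\cdd^{-1}\cgd\,\vec{V}_g$, whose increments obey
\begin{equation}
    \delta\vec{V}_d=-\,\cdd^{-1}\cgd\;\delta\vec{V}_g\;=:\;\mathbf{L}\,\delta\vec{V}_g .
\end{equation}
A virtual-gate basis is a linear reparametrisation $\delta\vec{V}_g=\mathbf{\alpha}^{T}\delta\vec{V}_d^{\,\mathrm{virt}}$ under which each virtual coordinate moves a single dot potential; equivalently $\mathbf{L}\,\mathbf{\alpha}^{T}$ is diagonal, and after normalising this is the requirement that $\mathbf{\alpha}^{T}$ be a right inverse of $\mathbf{L}$, $\mathbf{L}\,\mathbf{\alpha}^{T}=\mathbf{I}_{n_{\mathrm{dot}}}$. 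The $i$th column of $\mathbf{\alpha}^{T}$, i.e.\ the $i$th row of $\mathbf{\alpha}$, is then exactly the gate-voltage vector that shifts the $i$th dot potential alone.

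First I would handle the fact that $\mathbf{L}\in\mathbb{R}^{n_{\mathrm{dot}}\times n_{\mathrm{gate}}}$ is in general rectangular and possibly rank-deficient, so a right inverse may fail to be unique (surplus gates) or to exist exactly (too few gates, or dependencies among them). The canonical choice is $\mathbf{\alpha}^{T}=\mathbf{L}^{+}$, the Moore--Penrose pseudoinverse, and I would argue this is ``optimal'' in the sense the statement intends: it yields $\mathbf{L}\,\mathbf{\alpha}^{T}=\mathbf{I}$ exactly whenever that is possible, otherwise it minimises $\|\mathbf{L}\,\mathbf{\alpha}^{T}-\mathbf{I}\|$, and among all such minimisers it has least Frobenius norm — i.e.\ it decouples the dots using the smallest gate excursions, which is the experimentally relevant optimality criterion. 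This is the standard variational characterisation of the pseudoinverse and involves no fresh computation.

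It then remains to put $(\mathbf{L}^{+})^{T}$ in the quoted form. Transposing and using the identity $(\mathbf{M}^{T})^{+}=(\mathbf{M}^{+})^{T}$ gives $\mathbf{\alpha}=(\mathbf{L}^{T})^{+}$, and $\mathbf{L}^{T}=-(\cdd^{-1}\cgd)^{T}=-\,\cgd^{T}(\cdd^{-1})^{T}=-\,\cgd^{T}\cdd^{-1}=-\,\mathbf{c}_{dg}\,\cdd^{-1}$, where I used that $\cdd^{-1}$ is symmetric ($\cdd$ is symmetric positive definite by Appendix~\ref{appendix: positive_definite_proof}) and that $\cgd^{T}=\mathbf{c}_{dg}$ by symmetry of the Maxwell matrix. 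Dropping the overall sign, which merely fixes the orientation of each virtual gate, yields $\mathbf{\alpha}=(\cgd\,\cdd^{-1})^{+}$ as stated, once $\cgd$ in that expression is read as the dot–gate block $\mathbf{c}_{dg}$ (as it is, for instance, in the Appendix free-energy expressions). The $i$th row is then $(\mathbf{L}^{+}\vec{e}_i)^{T}$, the gate voltages that raise the $i$th dot's potential, which completes the argument.

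Since the statement is essentially an identification, I expect the main obstacle to be one of care rather than depth: one must (i) pin down and verify the precise sense in which the pseudoinverse is the \emph{optimal} decoupler — the minimum-Frobenius-norm least-squares property — rather than just asserting it; and (ii) keep the transpose and dimension bookkeeping straight, so that the \emph{rows} of $(\cgd\,\cdd^{-1})^{+}$ (and not its columns, nor the rows of $(\cdd^{-1}\cgd)^{+}$) emerge as the virtual-gate vectors. In particular one must avoid splitting $(\mathbf{A}\mathbf{B})^{+}$ as $\mathbf{B}^{+}\mathbf{A}^{+}$, which is false in general; only the transpose identity above is needed.
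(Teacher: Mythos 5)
Your proposal is correct and follows essentially the same route as the paper's own proof: linearise the dot-potential response $\delta\vec{V}_d \propto \cdd^{-1}\cgd\,\delta\vec{V}_g$ from Eq.~\eqref{eq: potential} and invert it with the Moore--Penrose pseudoinverse, which reduces to the true inverse when square and otherwise gives the minimum-norm or least-squares solution. You are merely more explicit than the paper about the sign convention, the transpose/row-versus-column bookkeeping, and the variational sense in which the pseudoinverse is ``optimal,'' which the paper's terse proof leaves implicit.
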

	\begin{proof}
The quantum dot potential changes with gate voltages according to $\Delta \vec{V}_{d} = \cdd^{-1} \cgd \Delta \vec{V}_g$ according to \eqref{eq: potential}.
If this simple linear system is invertible, the Moore-Penrose pseudoinverse boils down to the regular inverse. 
If it is underdetermined (i.e., more gates than dots and full row rank), there are infinite solutions. In this case, the pseudoinverse picks the solution with minimal 2-norm.
In case of an overdetermined linear system (i.e. more dots than gates or no full column rank) there is no solution. In this case, the pseudoinverse returns the least squares approximate solution.
\end{proof}

\section{Effect of threshold} \label{appendix: effect_of_tolerance}

As evidenced in Figure \ref{fig: threshold_benchmark}, the thresholding strategy can yield significant reductions in the compute time by reducing the number of charge states evaluated to determine whether they are the lowest in energy. However, using values that are too small for the threshold will introduce artefacts in the charge stability diagram. Decreasing the threshold further will exaggerate these artefacts to the degree that for $t=0$, all interdot transitions will be lost. We demonstrate this in Fig.~\ref{fig: too_small_threshold} a), where we simulate a double quantum dot with dot-dot capacitance matrix
\begin{equation}
	c_{dd} = \begin{bmatrix}
		1.4 & -0.2 \\
		-0.2 & 1.4
	\end{bmatrix}. 
\end{equation}
In this case, our empirical understanding of the threshold suggests the minimum value should be the ratio of the off-diagonal elements to the on so $t_{\text{min}} = 0.2/1.4 = 1/7$, this is confirmed by plots e) and f) in Figure \ref{fig: too_small_threshold} being identical. For smaller but non-zero values of the threshold, the charge stability diagram is distorted with the size of the interdot transition being suppressed with decreasing threshold (Fig.~\ref{fig: too_small_threshold} b-d)).  In Appendix \ref{appen: threshold}, we motivate a formula for the minimum value of the threshold $t$. In this case, the value suggested is more conservative at $0.202$. While not wrong, this value is quite a bit larger than what empirical intuition suggests; more work is required to understand the thresholded algorithm fully. 

\begin{figure}
	\centering
	\includegraphics[width = 0.8\textwidth]{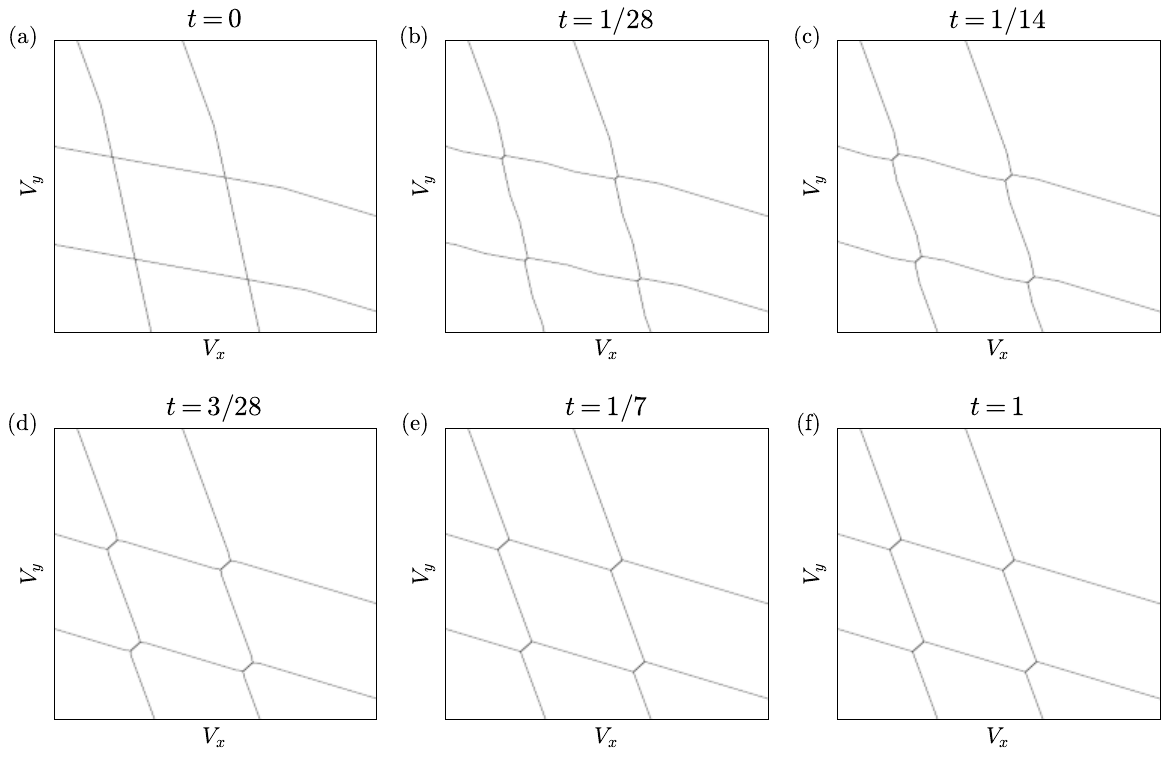}
	\caption{
	(a-e) Charge stability diagrams of an open double quantum dot produced by the thresholded algorithm for threshold values smaller than or equal to the empirical minimum value of $1/7$ based on the capacitance matrix. The value of the threshold used in the simulation is stated above in the plot. 
	f) Simulated charge stability diagram with the thresholded algorithm with the threshold set to one, so it performs identically to the default algorithm. 
	}
	\label{fig: too_small_threshold}
\end{figure}

In addition, even when the threshold is well above the minimum, the thresholding strategy appears to introduce almost imperceptible distortions into the charge stability diagram. Figure \ref{fig: threshold_difference} compares the charge stability diagram produced by the Rust implementation threshold set to $1/3$ with that produced by the brute force implemented in JAX for a quadruple dot. For the capacitance matrix used in this simulation, our empirical understanding of the threshold value required to capture the interdot transitions accurately is $t_{\text{min}} = 0.08$, whilst the analytical formula gives $0.19$. We hypothesise finite precision of the OSQP solver, which is exacerbated by the thresholding. For example, if a threshold of $t = 2/3$ is used, the charge stability diagram is identical to brute force. 

\begin{figure}
	\includegraphics[width = 0.9\textwidth]{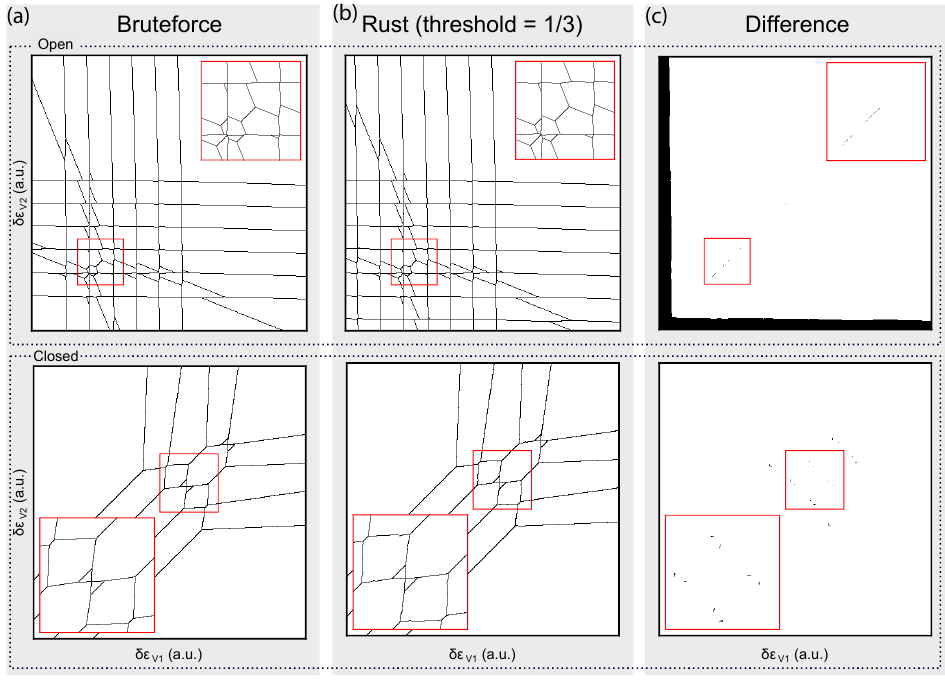}
	\centering
	\caption{Diagrams illustrating the errors introduced by the thresholding strategy. 
	(a) The charge stability diagram of a four-gate quadrupled quantum dot device as a function of the left and rightmost gates, in both the open and closed regime, where four holes are confined. The brute-force core was used to compute the charge stability diagrams. 
	(b) Identical charge stability diagram, but computed using the Rust core with a threshold of $1/3$. 
	(c) The pixel-wise difference between the brute force and the rust core. Black pixels indicate that the predictions differ. The solid black band along the $x$ and $y$ axis demonstrate the brute-force algorithm failing when $n_{\text{max}}$ is too small. Whilst the small discrepancies in the centre of the plot are due to the thresholding algorithm. 
	}
	\label{fig: threshold_difference}
\end{figure}

In summary, more work is required to understand the implications of the thresholding strategy fully. However, it is worth noting that these distortions in the charge stability diagram are next to irrelevant when using the constant capacitance model to generate training data for neural networks such as in references \cite{zwolak_autotuning_2020, Liu_2022}. As effects we manually add to make the charge stability diagram look more \say{realistic}, such as thermal broadening, measuring the charge stability diagram through a charge sensor and noise, will entirely swap the distortions. And the speed of the thresholded algorithm will allow for much larger more diverse training datasets.

\section{Simulation times}
\label{app: simulation_times}

This section tabulates the time required to simulate the charge stability diagrams in Fig.\ref{fig:recreations}, using each of the software implementations. 

\begin{table}[]
\centering
\begin{tabular}{|l|l|l|l|}
\hline
Algorithm           & Implementation & \begin{tabular}[c]{@{}l@{}}Simulation time \\ Figure \ref{fig:recreations} b) (s)\end{tabular} & \begin{tabular}[c]{@{}l@{}}Simulation time\\ Figure \ref{fig:recreations} d) (s)\end{tabular} \\ \hline
Brute-force         & Python         & 4.6                                                                                            & 137.6                                                                                         \\
Brute-force         & JAX            & 0.05 + 0.32 jit                                                                                & 21.1 + 0.4 jit                                                                                \\
Default             & Python         & 4.92                                                                                           & 18.1                                                                                          \\
Default             & JAX            & 0.09  + 0.8 jit                                                                                & 2.1+ 0.8 jit                                                                                  \\
Default             & Rust           & 0.10                                                                                           & 0.71                                                                                          \\
Thresholded $t=1/2$ & Python         & 2.81                                                                                           & 10.8                                                                                          \\
Thresholded $t=1/2$ & Rust           & 0.08                                                                                           & 0.66                                                                                          \\ \hline
\end{tabular}
\caption{The compute times for all the different implementations of the brute-force, default and thresholded algorithms, to recreate plots b) and d) in Figure \ref{fig:recreations}. For the JAX-based implementations, we include the one-off jit compile time. For the brute-force algorithm, we used $n_{\text{max}} = 2$ and $5$ for the open four-dot array and the closed five-dot array in b) and d), respectively. 
}
\label{tab:compute_times}
\end{table}

\end{document}